%% file: main.tex
\documentclass[12pt,onecolumn]{IEEEtran}
\usepackage{amsmath,amsfonts}
\usepackage{algorithmic}
\usepackage{algorithm}
\usepackage{array}
\usepackage[caption=false,font=normalsize,labelfont=sf,textfont=sf]{subfig}
\usepackage{textcomp}
\usepackage{stfloats}
\usepackage{url}
\usepackage{verbatim}
\usepackage{graphicx}
\usepackage{cite}
\usepackage{ieee}

\begin{document}

\title{Distributed Hypothesis Testing Against Dependence}

\author{Han Wu and Shun Watanabe
        \thanks{The authors are with Tokyo University of Agriculture and Technology, Japan. Email: hanwu@go.tuat.ac.jp, shunwata@cc.tuat.ac.jp.
        This work was supported by JST, CRONOS, Japan Grant Number JPMJCS25N5.}
}



\maketitle

\begin{abstract}
We study distributed hypothesis testing and establish the exact error exponent in single-letter form for new testing problems. 
In distributed hypothesis testing, a receiver decides between \(\mathcal{H}_0:P_{XY}\) and \(\mathcal{H}_1:Q_{XY}\) based on \(Y^n\) and a rate-limited description of \(X^n\). 
So far, such single-letter forms are known only for testing against independence, studied by Ahlswede and Csiszár, and testing against conditional independence, studied by Rahman and Wagner.
In this paper, we study testing against dependence, where \(P_{XY}=P_XP_Y\), and show that its error exponent is given by Han’s exponent, which is established by single-letterizing a multi-letter version of Han's exponent. 
Our result disproves a previous conjecture by Han claiming that the error exponent is given by the lautum information.
We then consider the Cartesian product of testing against dependence and testing against independence, and derive a single-letter characterization of its error exponent.
Finally, we study testing against conditional dependence, which is the dependence-testing counterpart of the setting studied by Rahman and Wagner.
We derive a single-letter converse bound for the error exponent by introducing and solving a related setting where the side information is also available to the transmitter.
We show that our converse bound is tight in some cases by using a conditional-coding version of the quantization scheme, which improves upon existing achievability schemes.

\end{abstract}
\section{Introduction}

We study distributed hypothesis testing, where we have a pair of jointly i.i.d. sequence \((X^n, Y^n)\).
A transmitter observes \(X^n\) and describe it to a receiver via a noiseless link of rate \(R\).
Based on the description and the sequence \(Y^n\), the receiver decides between two hypotheses for the underlying distribution of \((X^n, Y^n)\), denoted by \(\mathcal{H}_0: P_{XY}\) and \(\mathcal{H}_1: Q_{XY}\).
The decision process gives two error probabilities, one associated with each hypothesis.
Our goal is to minimize the type-II error probability subject to a prescribed constraint on the type-I error probability; or equivalently, to maximize its decay rate, known as the error exponent, which we denote by \(E(R)\).

Ahlswede and Csiszár \cite{ahlswedeHypothesisTestingCommunication1986} established a multi-letter characterization for \(E(R)\).
Due to its limited practical insight, subsequent work has focused instead on deriving a single-letter characterization.
So far, only two testing problems are known to have such a single-letter form. 
The first is testing against independence, where \(Q_{XY} = P_{X}P_{Y}\).
By using the conventional single-letterization approach for mutual information, Ahlswede and Csiszar \cite{ahlswedeHypothesisTestingCommunication1986} showed that for this testing problem we have
\begin{equation}
    E(R) = \max_{\substack{P_{U|X}:\\ I(U;X) \leq R }} I(U;Y),
\end{equation}
where we have the Markov chain \(U \to X \to Y\).
The second is testing against conditional independence introduced by Rahman and Wagner \cite{rahmanOptimalityBinningDistributed2012}, where the receiver now also has access to an additional side information sequence \(Z^n\) and decides between \(\mathcal{H}_0: P_{XYZ}\) and \(\mathcal{H}_1: P_{X|Z}P_{Z}P_{Y|Z}\).

\subsection{Testing Against Dependence and Related Problems}
In this paper, we are interested in the case \(P_{XY} = P_{X}P_{Y}\), known as testing against dependence.
Due to the similarity between testing against dependence and testing against independence,
Han \cite{Han2012Tradeoff} conjectured that if \(Q_{XY}  = P_{XY}\), then \(E(R)\) is given by \(D(P_{U}P_{Y}\|P_{UY})\), known as the lautum information \cite{palomarLautumInformation2008}.
In this paper, we disprove his conjecture and show that, in fact, \(E(R)\) is given by the exponent established by Han  for distributed hypothesis testing in \cite{hanHypothesisTestingMultiterminal1987}.
Our main contribution is a converse argument where, rather than single-letterizing the multi-letter characterization of Ahlswede and Csiszár directly, we single-letterize a multi-letter version of Han's exponent, yielding a matching converse bound for this problem.
Our single-letterization method is inspired by that used in the proof of \cite[Lemma 7]{wuExponentialErrorBounds2026}.

In addition to testing against dependence, we also study two related problems.
The first, which we call \emph{testing against the product of dependence and independence}, is obtained by taking the Cartesian product of the two problems. This problem is motivated by a similar problem studied in broadcast channels, known as the product of degraded broadcast channels \cite[Problem 5.11]{gamalNetworkInformationTheory2011}.
We will show that, in this case, \(E(R)\) can be viewed as the Minkowski sum of the exponents associated with the individual problems.

The second problem is called \emph{testing against conditional dependence}, which is the dependence-testing counterpart of the problem studied by Rahman and Wagner.
In this setting, the receiver has access to an additional side information sequence \(Z^n\) besides \(Y^n\), and decides between
\begin{equation*}
    \mathcal{H}_0: P_{X|Z}P_{Z}P_{Y|Z}  \qquad \text{and} \qquad \mathcal{H}_1: Q_{XYZ}.
\end{equation*}
We study this problem by first introducing a related problem in which the side information \(Z^n\) is also available to the transmitter.
We call this related problem \emph{conditional testing against dependence}, owing to its similarity to conditional source coding.
We establish \(E(R)\) in single-letter form for conditional testing against dependence by deriving matching achievability and converse bounds.
The converse is derived by adapting our converse argument for testing against dependence to incorporate \(Z^n\), while the achievability is established using a new conditional coding version of the quantization scheme of Ahlswede and Csiszár \cite{ahlswedeHypothesisTestingCommunication1986} and Han \cite{hanHypothesisTestingMultiterminal1987}.
Since the only difference between the two settings is the absence of \(Z^n\) at the transmitter, we readily obtain a single-letter upper bound on \(E(R)\) for testing against conditional dependence.

It thus remains to determine whether our upper bound is tight.
We show that there can be a clear gap between existing achievable exponents and our upper bound. 
To demonstrate its tightness, we consider the case where \(Z\) is a function of \(X\). 
The transmitter can then recover \(Z^n\) from \(X^n\), reducing the problem to conditional testing against dependence. 
The upper bound is therefore achievable using the conditional quantization scheme, which improves upon existing achievability schemes.

The rest of the paper is organized as follows. 
After reviewing the literature and describing key notations at the end of this section, in the next section we provide a formal description of testing against dependence.
In Section \ref{sec:main_results}, we provide the main results of this paper and provide insights, where we will also describe the two related problems in further detail.
Sections \ref{sec:converse_proof_thm_testing_against_dependence} to \ref{sec:converse_proof_conditional_testing_against_depedence} are dedicated to proving the main results, while proofs of some technical lemmas are deferred to the appendices. 

\subsection{Literuature Review}
We now briefly review the literature on distributed hypothesis testing.
The research on this topic was initiated by Berger \cite{Berger1979Decentralized} and Ahlswede and Csiszár \cite{ahlswedeHypothesisTestingCommunication1986}.
The latter derived a multi-letter characeterization of \(E(R)\) and also an achievable single-letter lower bound using the so-called quantization scheme.
Han \cite{hanHypothesisTestingMultiterminal1987} then derived a tigher lower bound through a more refined analysis of the same scheme, and he also considered the zero-rate case, which was futher studied in \cite{shalabyMultiterminalDetectionZerorate1992}.
Later on, Shimokawa \emph{et al.} \cite{shimokawaErrorBoundHypothesis1994} improved the quantization scheme by introducing binning, and their scheme was recently further improved by Kochman and Wang \cite{kochmanImprovedRandomBinningExponent2025}.
Watanabe \cite{watanabeSuboptimalityRandomBinning2022} showed that the quantization-binning scheme is not optimal and also introduced the analysis of the minimal rate that attains the Stein's exponent. 
The trade-off between two error exponents were studied in \cite{hanExponentialtypeErrorProbabilities1989} and \cite{weinbergerReliabilityFunctionDistributed2019}.
Achievability regarding Körner-Marton coding was investigated in \cite{haimBinaryDistributedHypothesis2016, GirCunTel26}.
A survey can be found in \cite{hanStatisticalInferenceMultiterminal1998}.
Converse bounds by introducing auxiliary side information were established in  \cite{rahmanOptimalityBinningDistributed2012,kochmanImprovedRandomBinningExponent2025}.
A converse bound by introducing auxiliary receiver was recently established in \cite{wenNewUpperBound2025}.
Multi-terminal and other variants were considered in, e.g., \cite{shalabyErrorExponentsDistributed1994, tianSuccessiveRefinementHypothesis2008,katzDistributedBinaryDetection2017, zhaoDistributedTestingCascaded2018,gilaniDistributedHypothesisTesting2019,salehkalaibarHypothesisTestingTwohop2019, weinbergerExponentTradeoffHypothesis2019, sreekumarDistributedHypothesisTesting2020,salehkalaibarDistributedHypothesisTesting2020,salehkalaibarDistributedHypothesisTesting2020a,salehkalaibarDistributedSequentialHypothesis2021,hamadMultihopNetworkMultiple2023, zaidiRateexponentRegionClass2023}.

\textbf{Note:} As we were completing the manuscript, we became aware of the independent work \cite{wenExactRateExponent2026}, where among other results the authors also solved  testing against dependence using a different approach, based on the auxiliary receiver approach \cite{wenNewUpperBound2025}.

\subsection{Notation}
All alphabets in this paper are finite.
We use \(\mathcal{P}(\mathcal{X})\) to denote the set of all pmfs \(P_X\) on \(\mathcal{X}\).
We write \(\bm{x}=(x_1, x_2, \ldots, x_n)\) for an \(n\)-length sequence from \(\mathcal{X}^n\).
A random vector on \(\mathcal{X}^n\)  is denoted by \(\bm{X}=(X_1, X_2, \ldots, X_n)\).
We may also write \(x^n\) and \(X^n\) instead of \(\bm{x}\) and \(\bm{X}\).
Types are denoted by \(\hat{P}_{\bm{x}}, \hat{P}_{\bm{x}\bm{y}}\), and \(\hat{P}_{\bm{y}|\bm{x}}\).
The set of all types and conditional types are written as \(\mathcal{P}_n(\mathcal{X})\) and \(\mathcal{P}_n(\mathcal{Y}|\mathcal{X})\), with a type class being \(\mathcal{T}_n(Q_X)\) and a conditional type class being \(\mathcal{T}_n(Q_{Y|X} | \bm{x})\).
We use \(P[\mathcal{A}]\) to denote the probability of an event \(\mathcal{A}\) under the probability measure \(P\), while \(\idc\{\mathcal{A}\}\) is the indicator function of \(\mathcal{A}\) and \(\abs{\mathcal{A}}\) is its cardinality.
Let \(a_n \ndot{\leq} b_n\) if \(\limsup_{n \to \infty}\frac{1}{n}\log (a_n/b_n) \leq 0\) and \(a_n \ndot{\geq} b_n\) if \(\liminf_{n \to \infty}\frac{1}{n} \log (a_n/b_n) \geq 0\).
We will write \(a_n \ndot{=} b_n\) if both hold.
For a positive integer constant \(N\), we use \([N]\) to denote \(\{1,2,\ldots,N\}\).
Let \(|a|^{+} \triangleq \max\{0,a\}\).

\section{Problem Setup}
\label{sec:problem_setup}
We first describe the problem setting in further detail.
Consider an i.i.d. sequence pair \((X^n, Y^n)\), generated according to one of the following hypotheses:
\begin{align}
    \mathcal{H}_0: (X, Y) & \sim P_{XY}, \\
    \mathcal{H}_1: (X, Y) & \sim Q_{XY}.
\end{align}
As illustrated in Fig. \ref{fig:problem-setup:DHT-model}, a transmitter observes \(X^n\) and describes it to a receiver using an encoder \(f_n:\mathcal{X}^n \to [e^{nR}] \).
Let \(M = f_n(X^n)\) denote the message forwarded by the transmitter.
The receiver receives \(M\) and observes \(Y^n\); then decides between \(\mathcal{H}_0\) and \(\mathcal{H}_1\) based on a test \(\varphi_n: \mathcal{Y}^n \times [e^{nR}] \to \{\mathcal{H}_0, \mathcal{H}_1\}\).

\begin{figure}[hbt!]
    \centering
    \scalebox{0.88}{\input{fig/DHT_model.tex}}
    \captionsetup{justification=centering}
    \caption{Distributed Hypothesis Testing}
    \label{fig:problem-setup:DHT-model}
\end{figure}

Given a testing scheme \((f_n, \varphi_n)\), two error probabilities arise, corresponding to incorrect decisions under each hypothesis:
\begin{align}
    \alpha_n & \triangleq  P_{XY}^n[ \varphi_n( Y^n, f_n(X^n)  ) = \mathcal{H}_1  ], \\
    \beta_n & \triangleq  Q_{XY}^n[ \varphi_n( Y^n, f_n(X^n)  ) = \mathcal{H}_0  ].
\end{align}
We follow the Neyman-Pearson formulation for hypothesis testing, where the goal is to minimize the type-II error probability \(\beta_n\) subject to a constraint on \(\alpha_n\).
We say a rate-exponent pair \((R, E)\) is achievable if there exists a sequence of testing schemes \((f_n, \varphi_n)\) satisfying \(||f_n|| \leq e^{nR}\) such that
\begin{align}
    \lim_{n\to \infty} \alpha_n & = 0,\\
    \liminf_{n\to \infty} -\frac{1}{n}\log\beta_n & = E.
\end{align}
For a fixed \(R\), we are interested in the maximum achievable exponent
\begin{equation*}
    E(R) \triangleq \max \{E: (R, E) \text{ is achievable} \}.
\end{equation*}

The study of \(E(R)\) dates back to Ahlswede and Csiszár \cite{ahlswedeHypothesisTestingCommunication1986}, who established a multi-letter characterization for \(E(R)\), showing that
\begin{equation}
    E(R) = \sup_{n}\sup_{ \substack{ (f_n, \varphi_n): \\ ||f_n|| \leq e^{nR} } } \frac{1}{n} D(P_{MY^n} \| Q_{MY^n}). \label{eq:problem_setup:exponent_multi_letter_characterization}
\end{equation}
Since such a multi-letter characterization provides limited practical insight, the literature has instead focused on obtaining a single-letter characterization of \(E(R)\).
In this paper, we establish \(E(R)\) in single-letter form for a special class of the problem.
Before presenting our main results, we briefly review the related literature and introduce the necessary background.

\subsection{Han's Exponent}
We first review a single-letter lower bound for \(E(R)\) that is relevant to our results.
In \cite{ahlswedeHypothesisTestingCommunication1986}, Ahlswede and Csiszár derived a single-letter lower bound for \(E(R)\) using the so-called \emph{quantization} scheme, where the transmitter quantitizes \(X^n\) using a channel \(P_{U|X}\) and sends the quantization output \(U^n\) to the receiver.
Han \cite{hanHypothesisTestingMultiterminal1987} improved this bound through a more refined error analysis of the same scheme.
To describe Han's bound, we first introduce some notation.
Given any \(P_{U|X}\), let \(P_{UXY} = P_{U|X}P_{XY}\), \(Q_{UXY} = P_{U|X}Q_{XY}\), and define
\begin{equation}
    E(P_{UXY} \| Q_{UXY}) \triangleq \min_{\substack{\tilde{P}_{UXY} \in \mathcal{P}_{\mathrm{H}}(P_{UXY})  }} D( \tilde{P}_{UXY} \| Q_{UXY} ), \label{eq:problem_setup:Han_exponent_definition_1}
\end{equation}
where 
\begin{equation}
     \mathcal{P}_{\mathrm{H}}(P_{UXY})  \triangleq \{ \tilde{P}_{UXY}: \tilde{P}_{UX} = P_{UX}, \tilde{P}_{UY} = P_{UY}\}. \label{eq:problem_setup:Han_exponent_definition_2}
\end{equation}
Han showed that \(E(R) \geq E_{\mathrm{H}}(R)\), where
\begin{align}
    E_{\mathrm{H}}(R) & \triangleq \max_{\substack{P_{U|X}: \\ I(U;X) \leq R } } E(P_{UXY} \| Q_{UXY}), \label{eq:problem_setup:Han_exponent}
\end{align}
in which \(I(U;X) = I(P_{X}, P_{U|X})\) and \(|\mathcal{U}| \leq |\mathcal{X}| + 1\).
Henceforth, we refer to \(E_{\mathrm{H}}(R)\) as Han's exponent.
In this paper, we derive the matching converse bound \(E(R) \leq E_{\mathrm{H}}(R)\) for a special class of distributed hypothesis testing problems, thereby establishing the optimality of Han's exponent for this class.

\subsection{Testing Against Independence}
This special class studied in this paper is related to \emph{testing against independence} studied in the literature.
In testing against independence, the receiver decides between
\begin{align}
    \mathcal{H}_0: (X, Y) & \sim P_{XY}, \\
    \mathcal{H}_1: (X, Y) & \sim P_{X}P_{Y},
\end{align}
where \(P_{X}\) and \(P_{Y}\) are the marginal distributions of \(P_{XY}\).
Ahlswede and Csiszár \cite{ahlswedeHypothesisTestingCommunication1986} showed that for this class, we have
\begin{equation}
    E(R) = \max_{\substack{P_{U|X}: \\ I(U;X) \leq R}} I(U;Y),
\end{equation}
where \(P_{UXY} = P_{U|X}P_{XY}\) and \(|\mathcal{U}| \leq |\mathcal{X}| + 1\).

Later on, Rahman and Wagner \cite{rahmanOptimalityBinningDistributed2012} studied a variant of the above problem, termed \emph{testing against conditional independence}, where the receiver now also has access to an additional side information sequence \(Z^n\) and decides between
\begin{align}
    \mathcal{H}_0: (X, Y, Z) & \sim P_{XYZ}, \\
    \mathcal{H}_1: (X, Y, Z) & \sim P_{X|Z}P_{Z}P_{Y|Z}.
\end{align}
They derived \(E(R)\) in single-letter form for this class.\footnote{We only consider a single transmitter here, while Rahman and Wagner \cite{rahmanOptimalityBinningDistributed2012} studied a more general setup with multiple transmitters.}

\subsection{Testing Against Dependence}
In this paper, we study the following class of distributed hypothesis testing problems, where the receiver decides between
\begin{align}
    \mathcal{H}_0: (X, Y) & \sim P_{X}P_{Y}, \\
    \mathcal{H}_1: (X, Y) & \sim Q_{XY}.
\end{align}
That is, the first hypothesis is now a product distribution.
Note that here the two hypotheses can have different marginal distributions, different from testing against independence studied in the literature.
Due to the similarity between the independence-testing and dependence-testing settings, Han \cite{Han2012Tradeoff} conjectured that for testing against dependence with \(Q_{XY}= P_{XY}\), we have
\begin{equation}
    E(R) = \max_{\substack{P_{U|X}: \\ I(U;X) \leq R}} L(U;Y),
\end{equation}
where \(P_{UXY} = P_{U|X}P_{XY}\) and \(L(U;Y) \triangleq D(P_{U}P_{Y} \| P_{UY})\), which is known as the lautum information \cite{palomarLautumInformation2008}.
In this paper, we disprove his conjecture and show that, for this class, \(E(R)\) is instead given by Han's exponent \(E_{\mathrm{H}}(R)\).

\section{Main Results and Discussions}
\label{sec:main_results}
\subsection{Testing Against Dependence}
We first show that Han's exponent is optimal for testing against dependence. 
To this end, recall the definition of \( E(P_{UXY} \| Q_{UXY})\) in \eqref{eq:problem_setup:Han_exponent_definition_1} and \eqref{eq:problem_setup:Han_exponent_definition_2}.
\begin{theorem}
    \label{thm:testing_against_dependence}
    For testing against dependence, where
    \begin{align}
        \mathcal{H}_0: (X, Y) & \sim P_{X}P_{Y}, \\
        \mathcal{H}_1: (X, Y) & \sim Q_{XY},
    \end{align}
    we have \(E(R) = E_{\mathrm{H}}(R)\), i.e.,
    \begin{align}
        E(R) & = \max_{\substack{P_{U|X}: \\ I(U;X) \leq R } } E(P_{UX}P_{Y} \| Q_{UXY}), \label{eq:main_results:testing_against_dependence:exponent}
    \end{align}
    where \(|\mathcal{U}| \leq |\mathcal{X}| + 1\).
\end{theorem}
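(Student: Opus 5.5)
The achievability direction $E(R)\ge E_{\mathrm H}(R)$ is Han's result \cite{hanHypothesisTestingMultiterminal1987} specialized to $P_{XY}=P_XP_Y$, so that $P_{UXY}=P_{UX}P_Y$. I therefore plan to prove only the converse $E(R)\le E_{\mathrm H}(R)$. I would start from the multi-letter characterization \eqref{eq:problem_setup:exponent_multi_letter_characterization} and proceed in two steps: first bound it by a multi-letter version of Han's exponent, then single-letterize that quantity.

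\textbf{Step 1 (multi-letter Han bound).} Fix $(f_n,\varphi_n)$ with $M=f_n(X^n)$, and write $P_{M|X^n}$ for the deterministic encoder. Under $\mathcal H_0$, $M$ is independent of $Y^n$, so $P_{MY^n}=P_MP_Y^n$. Let $\tilde P_{MX^nY^n}$ be any distribution with $\tilde P_{MX^n}=P_{MX^n}$ and $\tilde P_{MY^n}=P_MP_Y^n$. By data processing (marginalizing out $X^n$),
\begin{equation*}
D(\tilde P_{MX^nY^n}\|P_{M|X^n}Q_{XY}^n)\ge D(\tilde P_{MY^n}\|Q_{MY^n})=D(P_{MY^n}\|Q_{MY^n}).
\end{equation*}
Minimizing over such $\tilde P$ gives $D(P_{MY^n}\|Q_{MY^n})\le E(P_{MX^n}P_Y^n\|Q_{MX^nY^n})$. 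In words, for product null hypotheses the Ahlswede--Csiszár quantity is dominated by Han's functional evaluated on the $n$-letter source with $U=M$.

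\textbf{Step 2 (single-letterization).} I would take the auxiliary $U_i=(M,X^{i-1},Y^{i-1})$. Under $\mathcal H_0$, $Y_i$ is independent of $(M,X^n,Y^{i-1})$, so $P_{U_iX_iY_i}=P_{U_iX_i}P_Y$ has exactly the required product form. Moreover
\begin{equation*}
I(U_i;X_i)=I(M,X^{i-1};X_i)=I(M;X_i|X^{i-1}),
\end{equation*}
so $\sum_i I(U_i;X_i)=I(M;X^n)\le nR$. The goal is
\begin{equation*}
E(P_{MX^n}P_Y^n\|Q_{MX^nY^n})\le\sum_{i=1}^n E(P_{U_iX_i}P_Y\|P_{U_i|X_i}Q_{XY}).
\end{equation*}
To prove it, I would take minimizers $\tilde P^{(i)}_{U_iX_iY_i}$ of each single-letter problem and glue them into an $n$-letter candidate. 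The candidate keeps $\tilde P_{MX^n}=P_{MX^n}$ and generates $Y_i$ successively from $\tilde P^{(i)}_{Y_i|U_iX_i}$, with the $X$'s beyond $i$ drawn from the $P$-conditional given $(M,X^i)$. One must then check two things. First, the constraint $\tilde P_{MY^n}=P_MP_Y^n$ holds; this should follow inductively from $\tilde P^{(i)}_{U_iY_i}=P_{U_i}P_Y$. Second, the chain rule splits $D(\tilde P\|P_{M|X^n}Q^n)$ into the $n$ single-letter divergences.

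\textbf{Finish.} Time sharing with $T$ uniform on $[n]$ and $U=(U_T,T)$ then gives $\frac1n D(P_{MY^n}\|Q_{MY^n})\le E(P_{UX}P_Y\|Q_{UXY})$ with $I(U;X)\le R$. Here I use that $X_T\sim P_X$ is independent of $T$ under i.i.d.\ sampling, so $I(U;X)=\frac1n\sum_i I(U_i;X_i)$. The value of $E$ at the time-shared $U$ should dominate the average of the per-letter values. Since $T$ is part of $U$, and both the constraints and $Q_{UXY}$ decompose per value of $T$, the minimization splits across $t$, so this should hold with equality. The cardinality bound $|\mathcal U|\le|\mathcal X|+1$ follows by the standard support lemma.

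\textbf{Main obstacle.} I expect the gluing step to be the hard part. The reference measure $P_{M|X^n}Q^n_{XY}$ does not factor per letter as $P_{U_i|X_i}Q_{X_iY_i}$: under $Q$ the past $Y^{i-1}$ is correlated with $X^{i-1}$, and the conditional law of $(M,X^{i-1})$ given $X_i$ is the $P$-law. So the chain-rule terms contain extra pieces relative to $D(\tilde P^{(i)}\|P_{U_i|X_i}Q_{XY})$.

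My first attempt would be a convex-duality route instead. I would write $E(\cdot\|\cdot)$ as a maximum over Lagrange multipliers (functions $a(u,x)$ and $b(u,y)$) of a log-partition expression. I would then bound the $n$-letter dual by choosing additive multipliers $\sum_i a_i(u_i,x_i)+b_i(u_i,y_i)$ and peeling off one letter at a time via the chain rule, in the spirit of \cite[Lemma 7]{wuExponentialErrorBounds2026}. The possibility of adjusting which past variables enter $U_i$ (for example including $X_{i+1}^n$ instead of $Y^{i-1}$) is the fallback if this choice fails.
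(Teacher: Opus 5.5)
Your Step~1, the rate bound $\sum_i I(U_i;X_i)=I(M;X^n)$, and the time-sharing finish all match the paper. The step that carries the converse, however, is never established:
\[
E(P_{MX^n}P_Y^n\|Q_{MX^nY^n})\le\sum_i E(P_{U_iX_i}P_Y\|P_{U_i|X_i}Q_{XY}).
\]
You flag the gluing as the main obstacle, predict extra chain-rule terms, and defer to an unexecuted fallback. The missing idea is the paper's choice $U_i=(M,X^{i-1})$, without $Y^{i-1}$, together with the candidate
\[
\tilde P_{MX^nY^n}=P_{MX^n}\prod_{i=1}^n\tilde P_{Y_i|MX^{i-1}X_i}.
\]
Here the $Y_i$ are conditionally independent given $(M,X^n)$, and each depends only on $(M,X^i)$. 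Sum out $x_n$, then $x_{n-1}$, and so on. Each stage uses $\sum_{x_i}P_{X_i|MX^{i-1}}\tilde P_{Y_i|MX^{i-1}X_i}=\tilde P_{Y_i|U_i}=P_Y$, which gives $\tilde P_{MY^n}=P_MP_Y^n$. Since $P_{X^n}=P_X^n$ and $Q_{Y^n|X^n}=\prod_iQ_{Y|X}$, the divergence then splits exactly:
\[
D(\tilde P_{MX^nY^n}\|P_{M|X^n}Q_{XY}^n)=nD(P_X\|Q_X)+\sum_{i=1}^nD(\tilde P_{Y_i|U_iX_i}\|Q_{Y|X}|P_{U_iX_i})=\sum_{i=1}^nD(\tilde P_{U_iX_iY_i}\|P_{U_i|X_i}Q_{XY}).
\]
The reference measure never has to factor letter by letter. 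The $X$-part is handled globally, and the $i$-th summand only involves the $\tilde P$-marginal of $(M,X^i,Y_i)$, which is exactly $P_{U_iX_i}\tilde P_{Y_i|U_iX_i}$.

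The extra pieces you anticipate come from putting $Y^{i-1}$ into $U_i$ and letting the glued kernel depend on it. The per-letter problem is posed under $P_{U_iX_i}=P_{MX^i}P_Y^{i-1}$. Under your glued law, however, $Y^{i-1}$ is correlated with $X^{i-1}$, so the $i$-th chain-rule term is weighted by the wrong $(U_i,X_i)$-marginal. Under $\mathcal{H}_0$, $Y^{i-1}$ is pure independent noise. It does not change the per-letter value, and the per-letter minimizer can be taken with a kernel that ignores it, so you should simply drop it.

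Your duality fallback also points the wrong way. The $n$-letter quantity is a maximum over \emph{all} multipliers $a(m,x^n)$ and $b(m,y^n)$, so plugging in additive multipliers only lower-bounds it. The converse needs an upper bound, and that is most directly obtained by exhibiting a feasible primal $\tilde P$, as above.
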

\begin{proof}
    The achievability part \(E(R) \geq E_{\mathrm{H}}(R)\) follows from \cite[Theorem 2]{hanHypothesisTestingMultiterminal1987}. We establish the converse part \(E(R) \leq E_{\mathrm{H}}(R)\) in Section \ref{sec:converse_proof_thm_testing_against_dependence}.
\end{proof}

We now provide insights the converse proof.
For simplicity, we focus on the special case \(Q_{XY} = P_{XY}\) in the following, where the two hypotheses have the same marginals.
We begin the proof with the multi-letter characterization of \(E(R)\)  in \eqref{eq:problem_setup:exponent_multi_letter_characterization}, i.e.,
\begin{equation}
    E(R) \ndot{\leq} \frac{1}{n} D(P_{M}P_{Y^n} \| P_{MY^n}). \label{eq:main_results:testing_against_dependence:multi_letter_upper_bound}
\end{equation}
The RHS of \eqref{eq:main_results:testing_against_dependence:multi_letter_upper_bound} is known as the lautum information \cite{palomarLautumInformation2008}, rather than the mutual information appearing in testing against independence.
As a result, the conventional approach used in the latter problem does not directly apply here, thus motivating a different approach.
In our approach, we first introduce an auxiliary distribution \(\tilde{P}_{MX^nY^n}\) satisfying appropriate marginal constraints to obtain
\begin{equation}
    D(P_{M}P_{Y^n} \| P_{MY^n}) \leq D(\tilde{P}_{MX^nY^n} \| P_{MX^nY^n}). \label{eq:main_results:testing_against_dependence:data_processing_inequality}
\end{equation}
This freedom in choosing \(\tilde{P}_{MX^nY^n}\) in \eqref{eq:main_results:testing_against_dependence:data_processing_inequality} then allows us to choose the one defined by
\begin{equation}
    \tilde{P}_{MX^nY^n} = P_{MX^n}\prod_{i=1}^{n} \tilde{P}_{Y_i | M X^{i-1}X_i}.
\end{equation}
However, a key challenge here is to ensure that the resulting \(\tilde{P}_{MX^nY^n}\) satisfies the marginal constraints.
We prove this using a nontrivial backward summation argument.
Substituting this auxiliary distribution into  \eqref{eq:main_results:testing_against_dependence:data_processing_inequality} immediately gives the desired single-letter bound, which establishes the converse.

\begin{remark}
\label{rem:generalization_testing_against_independence}
    Notice that the two hypotheses in Theorem \ref{thm:testing_against_dependence} can have different marginal distributions. 
    Motivated by this, we consider a more general formulation of testing against independence, where the receiver decides between
    \begin{align}
        \mathcal{H}_0: (X, Y) & \sim P_{XY}, \\
        \mathcal{H}_1: (X, Y) & \sim Q_{X}Q_{Y}.
    \end{align}
    In Appendix \ref{apd:converse_proof_generalization_testing_against_independence},  we point out that we also have \(E(R) = E_{\mathrm{H}}(R)\) under this more general formulation, i.e.,
    \begin{align}
        E(R) & = \max_{\substack{P_{U|X}: \\ I(U;X) \leq R } } E(P_{UXY} \| Q_{UX}Q_{Y}), \label{eq:main_results:testing_against_independence:exponent}
    \end{align}
    where \(|\mathcal{U}| \leq |\mathcal{X}| + 1\).
    Together with Theorem \ref{thm:testing_against_dependence}, this shows that Han's exponent is optimal for distributed hypothesis testing when either hypothesis is a product distribution.
\end{remark}

\subsection{Testing Against the Product of Dependence and Independence}
We next study another class of distributed hypothesis testing problems, obtained by taking the Cartesian product of testing against dependence and testing against independence.
As illustrated in Fig. \ref{fig:main_results:testing_against_product_dependence_independence}, we follow the same setup of Section \ref{sec:problem_setup}, except that now \(X=(X_1, X_2)\), \(Y=(Y_1, Y_2)\), and the receiver decides between:
\begin{align}
    \mathcal{H}_0: (X_1, X_2, Y_1, Y_2) & \sim P_{X_1}P_{Y_1}P_{X_2Y_2}, \\
    \mathcal{H}_1: (X_1, X_2, Y_1, Y_2) & \sim Q_{X_1Y_1}Q_{X_2}Q_{Y_2}.
\end{align}
In the following result, we establish \(E(R) = E_{\mathrm{H}}(R)\) under this class.
We also show that \(E(R)\) here has a simpler expression, which can be viewed as the Minkowski sum of the two exponents in \eqref{eq:main_results:testing_against_dependence:exponent} and \eqref{eq:main_results:testing_against_independence:exponent}.

\begin{figure}[t]
    \centering
    \scalebox{0.88}{\input{fig/Testing_against_product_dependence_independence.tex}}
    \captionsetup{justification=centering}
    \caption{Testing Against the Product of Dependence and Independence}
    \label{fig:main_results:testing_against_product_dependence_independence}
\end{figure}

\begin{theorem}
    \label{thm:testing_against_product_dependence_independence}
    For testing against the product of dependence and independence, where 
    \begin{align}
        \mathcal{H}_0: (X_1, X_2, Y_1, Y_2) & \sim P_{X_1}P_{Y_1}P_{X_2Y_2}, \\
        \mathcal{H}_1: (X_1, X_2, Y_1, Y_2) & \sim Q_{X_1Y_1}Q_{X_2}Q_{Y_2},
    \end{align}
    we have \(E(R) = E_{\mathrm{H}}(R)\).
    Moreover, \(E(R)\) here simplifies to
    \begin{align}
        E(R) &= \max_{\substack{P_{U_1|X_1}, P_{U_2|X_2}: \\ I(U_1;X_1) + I(U_2;X_2) \leq R}} E(P_{U_1X_1}P_{Y_1} \| Q_{U_1X_1Y_1}) +  E(P_{U_2X_2Y_2} \| Q_{U_2X_2}Q_{Y_2}), \label{eq:main_results:testing_against_product_dependence_independence:exponent}
    \end{align}
    where \(|\mathcal{U}_1| \leq |\mathcal{X}_1| + 1\) and \(|\mathcal{U}_2| \leq |\mathcal{X}_2| + 1\).
\end{theorem}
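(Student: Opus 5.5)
The plan has three parts: an achievability bound, a converse that decomposes the multi-letter divergence, and a check that the resulting sum is at most \(E_{\mathrm{H}}(R)\).

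\emph{Achievability.} Han's bound \cite[Theorem 2]{hanHypothesisTestingMultiterminal1987} gives \(E(R) \geq E_{\mathrm{H}}(R)\). We show that the right-hand side of \eqref{eq:main_results:testing_against_product_dependence_independence:exponent} is at most \(E_{\mathrm{H}}(R)\) by restricting to product test channels \(P_{U|X} = P_{U_1|X_1}P_{U_2|X_2}\) with \(U=(U_1,U_2)\). Then \(I(U;X) = I(U_1;X_1)+I(U_2;X_2)\), since \(X_1\) and \(X_2\) are independent under \(\mathcal{H}_0\). Under this choice
\begin{align*}
P_{UXY} &= (P_{U_1X_1}P_{Y_1})\,P_{U_2X_2Y_2}, \\
Q_{UXY} &= Q_{U_1X_1Y_1}\,(Q_{U_2X_2}Q_{Y_2}).
\end{align*}
Take any \(\tilde P\) in \(\mathcal{P}_{\mathrm H}(P_{UXY})\) and let \(\tilde P_1\), \(\tilde P_2\) be its marginals on the \(1\)- and \(2\)-components. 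By the chain rule, and because \(Q\) factorizes across the two components,
\[
D(\tilde P\|Q) \ge D(\tilde P_1\|Q_1) + D(\tilde P_2\|Q_2).
\]
Each marginal \(\tilde P_i\) inherits the constraints \(\tilde P_{U_iX_i} = P_{U_iX_i}\) and \(\tilde P_{U_iY_i} = P_{U_iY_i}\). Hence \(E(P_{UXY}\|Q_{UXY})\) is at least the sum of the two individual exponents; in fact equality holds, by taking \(\tilde P\) to be the product of the two minimizers. Thus RHS\(\le E_{\mathrm H}(R)\le E(R)\).

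\emph{Converse.} The converse is the main work. Start from \eqref{eq:problem_setup:exponent_multi_letter_characterization}, \(E(R)\ndot{\le}\frac1n D(P_{MY_1^nY_2^n}\|Q_{MY_1^nY_2^n})\). The plan is to build one auxiliary distribution that handles both components simultaneously:
\[
\tilde P_{MX_1^nX_2^nY_1^nY_2^n} = P_{MX_1^nX_2^nY_2^n}\prod_{i=1}^n \tilde P_{Y_{1,i}|MX_1^{i-1}X_{1,i}Y_2^n}.
\]
Here the block conditionals are chosen as in the proof of Theorem~\ref{thm:testing_against_dependence}, now with the extra conditioning on \(Y_2^n\) (and possibly \(X_2^n\)). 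Data processing then gives \(D(P_{MY^n}\|Q_{MY^n})\le D(\tilde P_{MX^nY^n}\|Q_{MX^nY^n})\), provided \(\tilde P\) has the required marginals; this is verified by the same backward summation argument. Since \(Q\) factorizes, this divergence splits into a \(Y_1\)-part and a \((M,X^n,Y_2^n)\)-part.

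The \(Y_1\)-part single-letterizes as in Section~\ref{sec:converse_proof_thm_testing_against_dependence}, with auxiliary \(U_{1,i}=(M,X_1^{i-1},Y_2^n)\). For the \((M,X^n,Y_2^n)\)-part, the term \(D(P_{MX^n}\|Q_{MX^n})\) involves only \(Q_{X_1X_2}\) marginals and is additive. The remaining term \(D(P_{Y_2^n|MX^n}\|Q_{Y_2^n})\) reduces, as in the Ahlswede--Csiszár/Remark~\ref{rem:generalization_testing_against_independence} argument, to \(\sum_i D(P_{U_{2,i}Y_{2,i}}\|P_{U_{2,i}}Q_{Y_2})\) with \(U_{2,i}=(M,X_2^{i-1},Y_2^{i-1})\)-type auxiliaries. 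The rate constraint is handled by splitting
\[
nR\ge I(M;X_1^nX_2^n)=I(M;X_2^n)+I(M;X_1^n|X_2^n),
\]
and the auxiliaries are chosen so that the two pieces become \(\sum_i I(U_{2,i};X_{2,i})\) and \(\sum_i I(U_{1,i};X_{1,i})\). For this, \(U_{1,i}\) should contain \(X_2^n\) and \(U_{2,i}\) should not contain \(X_1\)-information, and the Markov chains \(U_{1,i}\to X_{1,i}\to Y_{1,i}\) and \(U_{2,i}\to X_{2,i}\to Y_{2,i}\) must hold under \(P\). Time sharing plus convexity/concavity then give the single-letter form with independent \(P_{U_1|X_1},P_{U_2|X_2}\). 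The cardinality bounds follow from the support lemma.

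\emph{Main obstacle.} The hardest step is choosing the auxiliaries so that three requirements hold at once: (i) the backward summation marginal constraints survive the extra conditioning on \(Y_2^n\) and \(X_2^n\); (ii) the two rate terms separate as above; and (iii) the Markov chains hold. I would first try \(U_{1,i}=(M,X_1^{i-1},X_2^n,Y_2^n)\) and \(U_{2,i}=(M,X_2^{i-1},Y_2^{i-1})\). The Markov chain for \(U_{1,i}\) holds because \(Y_1\) is independent of everything under \(P\), while \(Y_2^n\) enters the rate term only through \(I(M;X_1^n|X_2^n,Y_2^n)\le I(M;X_1^n|X_2^n)\). If conditioning \(Y_2^n\) breaks the marginal constraints, the fallback is to drop \(Y_2^n\) from the auxiliary conditioning and instead bound the cross terms using the independence of the two components under \(P\).
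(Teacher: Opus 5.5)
Your achievability half is fine, and your equality claim via the product of the two minimizers is correct; the paper only needs the inequality. The converse, however, has a genuine gap in the $(M,X^n,Y_2^n)$-part.

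Your auxiliary distribution keeps $P_{MX_1^nX_2^nY_2^n}$ intact, so the chain rule gives
\begin{align*}
D(\tilde P_{MX^nY^n}\|Q_{MX^nY^n}) = D(P_{X^nY_2^n}\|Q_{X^nY_2^n}) + D(\tilde P_{Y_1^n|MX^nY_2^n}\|Q_{Y_1^n|X_1^n}|P_{MX^nY_2^n}).
\end{align*}
Since $M$ is a function of $X^n$ and $P_{Y_2^n|MX^n}=P_{Y_2^n|X_2^n}$, the term you want to reduce is exactly
\begin{align*}
D(P_{Y_2^n|MX^n}\|Q_{Y_2^n}|P_{MX^n}) = n\bigl[D(P_{Y_2}\|Q_{Y_2})+I(X_2;Y_2)\bigr].
\end{align*}
It cannot be bounded by $\sum_i D(P_{U_{2,i}Y_{2,i}}\|P_{U_{2,i}}Q_{Y_2})=\sum_i[D(P_{Y_2}\|Q_{Y_2})+I(U_{2,i};Y_{2,i})]$. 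That would require $nI(X_2;Y_2)\le\sum_i I(U_{2,i};Y_{2,i})$, which is the wrong direction of data processing along $U_{2,i}\to X_{2,i}\to Y_{2,i}$. Your bound is therefore rate-independent in the second component. For example, with $X_1,Y_1$ constant it gives $D(P_{X_2Y_2}\|Q_{X_2}Q_{Y_2})$ at every $R$. The correct value is $D(P_{X_2}\|Q_{X_2})+D(P_{Y_2}\|Q_{Y_2})+\max_{I(U;X_2)\le R}I(U;Y_2)$, which is strictly smaller for small $R$ when $I(X_2;Y_2)>0$.

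Your diagnosis of the main obstacle is also misplaced. The backward summation does survive conditioning on $(X_2^n,Y_2^n)$, because $Y_1^n$ is independent of everything under $\mathcal{H}_0$. The real problem is that lifting $Y_2^n$ jointly with $X_2^n$ into one divergence discards the rate limitation.

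The missing idea is to exploit the product structure \emph{before} lifting. On $(M,Y_1^n,Y_2^n)$ the two distributions are $P_{MY_2^n}P_{Y_1^n}$ and $Q_{MY_1^n}Q_{Y_2^n}$, so
\begin{align*}
D(P_{MY_2^n}P_{Y_1^n}\|Q_{MY_1^n}Q_{Y_2^n}) = D(P_MP_{Y_1^n}\|Q_{MY_1^n}) + D(P_{MY_2^n}\|P_MQ_{Y_2^n}).
\end{align*}
Then proceed in three steps:
\begin{itemize}
\item Apply the lifting from Theorem~\ref{thm:testing_against_dependence} only to the first term, using $X_1^n$ alone. This is valid since $M\to X_1^n\to Y_1^n$ under $Q$. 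Bound $D(P_{MX_1^n}\|Q_{MX_1^n})\le D(P_{X_1^nX_2^n}\|Q_{X_1^nX_2^n})=n[D(P_{X_1}\|Q_{X_1})+D(P_{X_2}\|Q_{X_2})]$, so $X_2^n$ enters only through its marginal.
\item Write the second term as $D(P_{Y_2^n}\|Q_{Y_2^n})+I(M;Y_2^n)$. Single-letterize $I(M;Y_2^n)$ by the standard weak converse with $U_{2,i}=(M,X_2^{i-1})$, as in Appendix~\ref{apd:converse_proof_generalization_testing_against_independence}, never pairing $X_2^n$ with $Y_2^n$ inside a divergence.
\item Use $I(M;X^n)\ge I(M;X_1^n)+I(M;X_2^n)$, which holds by independence of $X_1^n$ and $X_2^n$.
\end{itemize}
The resulting bounds depend only on $P_{\bar U_1X_1}$ and $P_{\bar U_2X_2Y_2}$ separately, where $\bar U_k$ is the time-shared $U_{k,J}$. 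No joint choice of auxiliaries or convexity argument is needed, and $E_{\mathrm{p}}(R)$ follows directly.
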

\begin{proof}
    See Section \ref{sec:proof_testing_against_product_dependence_independence}.    
\end{proof}

Note that testing against the product of dependence and independence does not fall into either the dependence-testing or the independence-testing setting.
Hence, Theorem \ref{thm:testing_against_product_dependence_independence} shows that the optimality of Han's exponent holds beyond these two settings.

\subsection{Testing Against Conditional Dependence}
Finally, we study the dependence-testing counterpart of the setting introduced by Rahman and Wagner \cite{rahmanOptimalityBinningDistributed2012}.
We begin by describing another testing problem, which we call \emph{conditional testing against dependence}. 
Its connection to the setting under investigation will become clear shortly.
As illustrated in Fig. \ref{fig:main_results:contional_testing_against_dependence}, the key difference between this problem and testing against dependence is the availability of an additional side information sequence to both the transmitter and the receiver, denoted by \(Z^n\).
The transmitter observes both \(X^n\) and \(Z^n\), and describes them to a receiver using an encoder \(\tilde{f}_n:\mathcal{X}^n \times \mathcal{Z}^n \to [e^{nR}] \).
The receiver receives \(M\) and observes \(Y^n\) and \(Z^n\); then decides between \(\mathcal{H}_0\) and \(\mathcal{H}_1\) based on a test \(\varphi_n: \mathcal{Y}^n \times \mathcal{Z}^n \times [e^{nR}] \to \{\mathcal{H}_0, \mathcal{H}_1\}\), where
\begin{align}
    \mathcal{H}_0: (X, Y, Z) & \sim P_{X|Z}P_{Y|Z}P_{Z}, \\
    \mathcal{H}_1: (X, Y, Z) & \sim Q_{XY|Z}Q_{Z}.
\end{align}

\begin{figure}[b]
        \centering
        \scalebox{0.88}{\input{fig/Conditional_testing_against_dependence.tex}}
        \captionsetup{justification=centering}
        \caption{Conditional Testing Against Dependence}
        \label{fig:main_results:contional_testing_against_dependence}
\end{figure}

\begin{figure}[b]
    \centering
    \scalebox{0.88}{\input{fig/Testing_against_conditional_dependence}}
    \captionsetup{justification=centering}
    \caption{Testing Against Conditional Dependence}
    \label{fig:main_results:testing_against_conditional_dependence}
\end{figure}

Let \(\tilde{E}(R)\) denote the maximum achievable exponent of the type-II error probability under this testing problem.
In the following result, we establish \(\tilde{E}(R)\) in single-letter form, which generalizes Han's exponent to include side information.
Before describing the result, we first introduce some notation.
Consider two pmfs \(P_{XYZ}\) and \(Q_{XYZ}\).
Given any \(P_{U|XZ}\), let \(P_{UXYZ} = P_{U|XZ}P_{XYZ}\), \(Q_{UXYZ} = P_{U|XZ}Q_{XYZ}\), and define
\begin{equation}
    \tilde{E}(P_{UXYZ} \| Q_{UXYZ}) \triangleq \min_{\substack{\tilde{P}_{UXYZ} \in \tilde{\mathcal{P}}_{\mathrm{H}}(P_{UXYZ})  }} D( \tilde{P}_{UXYZ} \| Q_{UXYZ} ),  \label{eq:main_results:conditional_testing_against_dependence:definition_E_tilde}
\end{equation}
where 
\begin{equation}
     \tilde{\mathcal{P}}_{\mathrm{H}}(P_{UXYZ})  \triangleq \{ \tilde{P}_{UXYZ}: \tilde{P}_{UXZ} = P_{UXZ}, \tilde{P}_{UYZ} = P_{UYZ}\}.
\end{equation}

\begin{theorem}
    \label{thm:conditional_testing_against_dependence}
    For conditional testing against dependence, where
    \begin{align}
        \mathcal{H}_0: (X, Y, Z) & \sim P_{XZ}P_{Y|Z}, \\
        \mathcal{H}_1: (X, Y, Z) & \sim Q_{XYZ},
    \end{align}
    we have
    \begin{equation}
        \tilde{E}(R) = \max_{\substack{P_{U|XZ}: \\ I(U;X|Z) \leq R }} \tilde{E}(P_{UXZ}P_{Y|Z} \| Q_{UXYZ}), \label{eq:main_results:conditional_testing_against_dependence:exponent}
    \end{equation}
    where \( |\mathcal{U}| \leq |\mathcal{X}||\mathcal{Z}| + 1 \).
\end{theorem}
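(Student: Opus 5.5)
The proof has two halves: achievability through a conditional version of the quantization scheme, and a converse obtained by adapting the backward-summation argument from the proof of Theorem \ref{thm:testing_against_dependence} so that it carries \(Z^n\).

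\textbf{Achievability.} Fix \(P_{U|XZ}\) with \(I(U;X|Z) < R\). For each conditional type of \(\bm{z}\), generate a codebook of about \(e^{n(I(U;X|Z)+\delta)}\) sequences \(\bm{u}\) drawn from the appropriate \(\bm{z}\)-conditional type class. The transmitter observes \((\bm{x},\bm{z})\) and sends the index of a \(\bm{u}\) that is jointly typical with \((\bm{x},\bm{z})\); the conditional covering lemma guarantees that such a \(\bm{u}\) exists. The receiver declares \(\mathcal{H}_0\) iff \(\hat{P}_{\bm{u}\bm{y}\bm{z}}\) is close to \(P_{UYZ}\), which keeps \(\alpha_n \to 0\).

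For the type-II error I would follow Han's refined analysis \cite{hanHypothesisTestingMultiterminal1987}. The \(\mathcal{H}_0\) region forces the triple \((\bm{u},\bm{x},\bm{z})\) to have type near \(P_{UXZ}\) (this is enforced by the encoder) and \((\bm{u},\bm{y},\bm{z})\) to have type near \(P_{UYZ}\) (this is enforced by the test). Since \(\bm{u}\) is a deterministic function of \((\bm{x},\bm{z})\), I bound \(\beta_n\) by summing over joint types \(\tilde{P}_{UXYZ}\) of \((\bm{u},\bm{x},\bm{y},\bm{z})\) with these two marginals. The method of types then gives the exponent \(\min D(\tilde{P}_{UXYZ}\|Q_{UXYZ})\) over \(\tilde{\mathcal{P}}_{\mathrm{H}}\), with \(Q_{UXYZ}=P_{U|XZ}Q_{XYZ}\). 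This yields the right-hand side of \eqref{eq:main_results:conditional_testing_against_dependence:exponent}, where the minimum is taken around \(P_{UXZ}P_{Y|Z}\).

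One technical point needs care. Using a deterministic encoder in place of the channel \(P_{U|XZ}\) changes the conditional type only slightly, so the bound holds by a continuity argument. Letting \(\delta\to0\) and using continuity in \(R\) finishes this half.

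\textbf{Converse.} Start from the analogue of \eqref{eq:problem_setup:exponent_multi_letter_characterization} with \(Z^n\) at both terminals:
\begin{equation*}
E \ndot{\leq} \tfrac1n D(P_{MY^nZ^n}\|Q_{MY^nZ^n}),
\end{equation*}
where under \(\mathcal{H}_0\) we have \(P_{MY^nZ^n}=P_{MZ^n}P_{Y^n|Z^n}\) and \(P_{Y^n|Z^n}=\prod_i P_{Y_i|Z_i}\). Next, introduce an auxiliary
\begin{equation*}
\tilde{P}_{MX^nY^nZ^n}=P_{MX^nZ^n}\prod_{i=1}^n \tilde{P}_{Y_i|M X^{i-1} X_i Z^n},
\end{equation*}
chosen so that \(\tilde{P}_{MY^nZ^n}=P_{MZ^n}\prod_i P_{Y_i|Z_i}\). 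The data processing inequality then gives
\begin{equation*}
D(P_{MY^nZ^n}\|Q_{MY^nZ^n}) \le D(\tilde{P}_{MX^nY^nZ^n}\|Q_{MX^nY^nZ^n}) .
\end{equation*}

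Because \(Q_{Y^n|X^nZ^n}=\prod_i Q_{Y_i|X_iZ_i}\), the chain rule splits the right-hand side into \(\sum_i D(\tilde{P}_{Y_i|U_iX_iZ_i}\|Q_{Y_i|X_iZ_i}\mid \tilde{P}_{U_iX_iZ_i})\) plus the \(X,Z\) term. Here I take \(U_i=(M,X^{i-1},Z^{i-1},Z_{i+1}^n)\), so that \(\tilde{P}_{U_iX_iZ_i}=P_{U_iX_iZ_i}\). I choose each conditional \(\tilde{P}_{Y_i|U_iX_iZ_i}\) as the minimizer of the \(i\)-th single-letter problem under the constraint \(\tilde{P}_{U_iY_iZ_i}=P_{U_iZ_i}P_{Y_i|Z_i}\).

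The main obstacle is showing that these per-letter constraints imply the global marginal \(\tilde{P}_{MY^nZ^n}=P_{MZ^n}\prod P_{Y_i|Z_i}\). I would attack this with the backward summation used for Theorem \ref{thm:testing_against_dependence}: sum out \(y_n\), then \(x_n\), and proceed downward. At each step, use that \(Z^n\) is i.i.d. and that \(Y\) is conditionally independent of \(X\) given \(Z\) under \(\mathcal{H}_0\), which makes the \(Z^n\) coordinates factor out.

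Finally, a time-sharing variable \(J\) with \(U=(U_J,J)\) gives single-letter terms. On the rate side,
\begin{equation*}
nR\ge I(M;X^n|Z^n)=\sum_i I(M;X_i|X^{i-1},Z^n)=\sum_i I(U_i;X_i|Z_i),
\end{equation*}
where the last step uses independence of \((X_i,Z_i)\) from \((X^{i-1},Z^{i-1},Z_{i+1}^n)\). The per-letter minimum is convex in the constraint, so a Jensen-type step over \(J\) gives the single-letter bound. The cardinality bound \(|\mathcal{U}|\le|\mathcal{X}||\mathcal{Z}|+1\) follows from the support lemma: preserve \(P_{XZ}\) and the two functionals (the rate and the exponent).
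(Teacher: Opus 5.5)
Your plan follows the paper's proof closely. For achievability you use the conditional-type-covering version of the quantization scheme. For the converse you use the auxiliary law $P_{MX^nZ^n}\prod_i\tilde P_{Y_i|U_iX_iZ_i}$ with $U_i=(M,X^{i-1},Z^{i-1},Z_{i+1}^n)$, the per-letter constraints $\tilde P_{U_iY_iZ_i}=P_{U_iZ_i}P_{Y_i|Z_i}$, backward summation, time sharing with $\bar U=(J,U_J)$, and the rate bound $I(\bar U;X|Z)\le R$.

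\textbf{Gap: the cardinality bound.} You assert it without the idea it needs. The support lemma only preserves functionals of the form $\sum_u P_U(u)\,g(P_{XZ|U=u})$ with $g$ continuous. The quantity $\tilde E(P_{UXZ}P_{Y|Z}\|Q_{UXYZ})$ is not visibly of that form. It is a minimization over the joint $\tilde P_{UXYZ}$ against the reference measure $P_{U|XZ}Q_{XYZ}$, and that measure depends on $(P_U,P_{XZ|U})$ nonlinearly. This is exactly the difficulty that makes Han's cardinality argument elaborate, so ``preserve the exponent'' cannot simply be quoted. What you need is the paper's observation, in three parts. First, on $\tilde{\mathcal P}_{\mathrm H}$ one has $\tilde P_{UXZ}=P_{UXZ}$, hence
\[
D(\tilde P_{UXYZ}\|P_{U|XZ}Q_{XYZ})=-H(X,Z)+\sum_u P_U(u)\bigl[H(X,Z|U=u)+D(\tilde P_{XYZ|U=u}\|Q_{XYZ})\bigr].
\]
Second, the constraints defining $\tilde{\mathcal P}_{\mathrm H}$ are separate constraints on each $\tilde P_{XYZ|U=u}$, so the minimum splits across $u$. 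Third, $P_{YZ|U=u}=\sum_x P_{XZ|U=u}P_{Y|Z}$ is determined by $P_{XZ|U=u}$. Together these give $\tilde E=-H(X,Z)+\sum_u P_U(u)\,g_1(P_{XZ|U=u})$. After checking that $g_1$ is continuous, you can apply the support lemma to $P_{XZ}$, $H(X|Z,U)$ and $g_1$ to get $|\mathcal U|\le|\mathcal X||\mathcal Z|+1$.

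\textbf{Smaller correction: the backward summation.} You sum out $x_n,x_{n-1},\dots$ with $y^n$ held fixed; summing out $y_n$ would destroy the marginal you are computing. What collapses each factor $\sum_{x_i}P_{X_i|MZ^nX^{i-1}}\tilde P_{Y_i|MZ^nX^{i}}$ to $P_{Y|Z}(y_i|z_i)$ is just the per-letter constraint $\tilde P_{Y_i|U_iZ_i}=P_{Y_i|Z_i}$. That constraint is available because $(U_i,Z_i)=(M,X^{i-1},Z^n)$. The i.i.d.\ structure and the Markov chain $X - Z - Y$ under $\mathcal H_0$ are used only to identify the target $P_{MZ^n}P_{Y^n|Z^n}$ and to single-letterize.

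\textbf{Smaller correction: no Jensen step.} No convexity or Jensen step is needed, and convexity would in any case give the inequality in the wrong direction for an upper bound. Since $J$ is part of $\bar U$, the constraints on $\tilde P_{\bar UXYZ}$ decouple over $J=j$. The average of the per-letter minima therefore equals $\tilde E(P_{\bar UXZ}P_{Y|Z}\|Q_{\bar UXYZ})$ exactly.
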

\begin{proof}
    We establish the achievability using a conditional coding version of the quantization scheme, and the converse by adapting the converse proof of Theorem \ref{thm:testing_against_dependence}.
    See Section \ref{sec:achievability_proof_conditional_testing_against_depedence} and Section \ref{sec:converse_proof_conditional_testing_against_depedence}, respectively.
    Note that the achievability proof in Section \ref{sec:achievability_proof_conditional_testing_against_depedence} applies to any joint pmf \(P_{XYZ}\), not only when \(P_{XYZ} = P_{XZ}P_{Y|Z}\).
    In the converse proof, we also prove the cardinality bound \( |\mathcal{U}| \leq |\mathcal{X}||\mathcal{Z}| + 1 \) using a simpler argument than Han's.
\end{proof}

Before returning to the initial setting of interest, we compare \(\tilde{E}(R)\) with the exponents obtained by applying Han's exponent \(E_{\mathrm{H}}(R)\) and the SHA exponent \cite{shimokawaErrorBoundHypothesis1994}, denoted by \(E_{\mathrm{SHA}}(R)\), to this problem.
This comparison is of independent interest and will also be useful in further analyzing the initial setting.
We first focus on Han's exponent.
Recall the definition of \(E_{\mathrm{H}}(R)\) in \eqref{eq:problem_setup:Han_exponent}.
By incorporating the side information \(Z^n\) into the observations \(X^n\) and \(Y^n\) at the transmitter and receiver, i.e., by replacing \(X^n\) and \(Y^n\) with \((X^n,Z^n)\) and \((Y^n,Z^n)\) respectively, we can verify that for conditional testing against dependence, \(E_{\mathrm{H}}(R)\) is given by
\begin{equation}
    E_{\mathrm{H}}(R)  = \max_{\substack{P_{U|XZ}: \\ I(U;X, Z) \leq R }} \tilde{E}(P_{UXZ}P_{Y|Z} \| Q_{UXYZ}). \label{eq:main_results:conditional_testing_against_dependence:Han_exponent_under_this_setting}
\end{equation}
As can be seen, the only difference between \eqref{eq:main_results:conditional_testing_against_dependence:exponent} and \eqref{eq:main_results:conditional_testing_against_dependence:Han_exponent_under_this_setting} is that in the former, we consider all \(P_{U|XZ}\) satisfying \(I(U;X|Z) \leq R\), while in the latter, this is restricted to \(I(U;X,Z) \leq R\).
This restriction is because Han's exponent is based on the original quantization scheme, in which \(X^n\) and \(Z^n\) are jointly quantized, resulting in \(I(U;X,Z)\).
In contrast, the conditional quantization scheme quantitizes \(X^n\) based on \(Z^n\), leading to \(I(U;X|Z)\).

We next focus on the SHA exponent.
Shimokawa \emph{et al.} \cite{shimokawaErrorBoundHypothesis1994} improved the original quantization scheme by introducing binning at the transmitter and, consequently, a decoding procedure at the receiver.
By doing so, they enlarged the set of admissible quantization channels, which, in the context of conditional testing against dependence, changes the constraint for \(P_{U|XZ}\) from \(I(U;X,Z)\) to \(I(U;X, Z| Y,Z) = I(U;X|Z)\).
The resulting exponent is the minimum of two error exponents: one for quantization and the other for decoding, as follows:
\begin{equation}
    E_{\mathrm{SHA}}(R) = \max_{\substack{P_{U|XZ}: \\ I(U;X|Z) \leq R }} \min\{ \tilde{E}(P_{UXZ}P_{Y|Z} \| Q_{UXYZ}), E_2(P_{U|XZ}) \}, \label{eq:main_results:conditional_testing_against_dependence:SHA_exponent_under_this_setting}
\end{equation}
where the definition of \(E_2(P_{U|XZ})\) can be found in \cite{shimokawaErrorBoundHypothesis1994}.
As can be seen, the only difference between \eqref{eq:main_results:conditional_testing_against_dependence:exponent} and \eqref{eq:main_results:conditional_testing_against_dependence:SHA_exponent_under_this_setting} is that the latter is additionally constrained by the decoding error exponent introduced by binning.
This constraint is lifted in the conditional quantization scheme since it does not rely on binning.

We now return to the initial setting of interest, which we call \emph{testing against conditional dependence}.
As illustrated in Fig. \ref{fig:main_results:testing_against_conditional_dependence}, the only difference between this problem and the problem studied in Theorem \ref{thm:conditional_testing_against_dependence} is that  \(Z^n\) is no longer available to the transmitter.
This immediately gives the following result.


\begin{theorem}
    \label{thm:testing_against_conditional_dependence}
    For testing against conditional dependence, where
    \begin{align}
        \mathcal{H}_0: (X, Y, Z) & \sim P_{XZ}P_{Y|Z}, \\
        \mathcal{H}_1: (X, Y, Z) & \sim Q_{XYZ},
    \end{align}
    we have
    \begin{equation}
        E(R) \leq \tilde{E}(R),
    \end{equation}
    where \(\tilde{E}(R)\) is given by \eqref{eq:main_results:conditional_testing_against_dependence:exponent}.
\end{theorem}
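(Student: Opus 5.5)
The bound follows from a simulation argument that compares the two problems under identical hypotheses; the converse part of Theorem~\ref{thm:conditional_testing_against_dependence} then does the rest.

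The two settings share the same hypotheses, $\mathcal{H}_0: P_{XZ}P_{Y|Z}$ and $\mathcal{H}_1: Q_{XYZ}$. They share the same receiver information $(M, Y^n, Z^n)$ and the same rate constraint $\|f_n\|\le e^{nR}$. The only difference is the class of admissible encoders.

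Take any testing scheme $(f_n,\varphi_n)$ for testing against conditional dependence, with $f_n:\mathcal{X}^n\to[e^{nR}]$ and $\varphi_n:\mathcal{Y}^n\times\mathcal{Z}^n\times[e^{nR}]\to\{\mathcal{H}_0,\mathcal{H}_1\}$. From it, define the encoder $\tilde f_n(x^n,z^n)\triangleq f_n(x^n)$ for conditional testing against dependence; it simply ignores $z^n$. Then $\|\tilde f_n\|\le e^{nR}$. Moreover, the pair $(\tilde f_n,\varphi_n)$ produces the same message $M$ and the same decision on every realization $(x^n,y^n,z^n)$. Hence its type-I and type-II error probabilities $\alpha_n,\beta_n$ coincide exactly with those of the original scheme under both hypotheses.

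Consequently, every rate-exponent pair $(R,E)$ achievable for testing against conditional dependence is also achievable for conditional testing against dependence. Taking the maximum over $E$ gives $E(R)\le\tilde E(R)$.

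Theorem~\ref{thm:conditional_testing_against_dependence} identifies $\tilde E(R)$ with the single-letter expression \eqref{eq:main_results:conditional_testing_against_dependence:exponent}; the only substantive ingredient is its converse part, which is assumed here.

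There is no real obstacle. The one point to check is that the encoder class of the original problem is genuinely contained in that of the enlarged problem, which holds because allowing $z^n$ at the transmitter only enlarges the set of encoders, while the receiver's information and the constraints are unchanged.
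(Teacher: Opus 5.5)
Your proposal is correct and follows the paper's reasoning. The paper also observes that the only difference from conditional testing against dependence is the absence of $Z^n$ at the transmitter, so every admissible encoder here is an admissible one there with identical error probabilities. The bound then follows from the converse part of Theorem~\ref{thm:conditional_testing_against_dependence}, which your write-up makes explicit.
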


Both Han's exponent and the SHA exponent give valid lower bounds when applied to this problem.
However, due to the absence of \(Z^n\) at the transmitter, the resulting exponents are even weaker than those in \eqref{eq:main_results:conditional_testing_against_dependence:Han_exponent_under_this_setting} and \eqref{eq:main_results:conditional_testing_against_dependence:SHA_exponent_under_this_setting}.
Hence, there is a clear gap between them and \(\tilde{E}(R)\).
It thus remains to see how tight our upper bound is. 
We next give an example in which the bound is tight, and, importantly, the achievability is established using the conditional quantization scheme.

\begin{corollary}
\label{cor:main_results:testing_against_conditional_dependence:mapping_case}
    In testing against conditional dependence, if there exists some function \(g\) such that \(Z = g(X)\), then we have \(E(R) = \tilde{E}(R)\).
\end{corollary}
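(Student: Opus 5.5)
The converse half, \(E(R) \leq \tilde{E}(R)\), is already Theorem \ref{thm:testing_against_conditional_dependence}. It holds for any pair of hypotheses of the stated form, because any encoder \(f_n(X^n)\) for testing against conditional dependence is a special case of an encoder \(\tilde{f}_n(X^n,Z^n)\) that simply ignores \(Z^n\). So only the achievability \(E(R) \geq \tilde{E}(R)\) needs proof, and the plan is to reduce the problem with \(Z = g(X)\) to conditional testing against dependence. Once that reduction is in place, we invoke the achievability part of Theorem \ref{thm:conditional_testing_against_dependence}, namely the conditional quantization scheme of Section \ref{sec:achievability_proof_conditional_testing_against_depedence}.

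The key step is to note that, with \(Z=g(X)\) holding symbolwise, the transmitter can compute \(Z^n = g^n(X^n) \triangleq (g(X_1),\ldots,g(X_n))\) deterministically from \(X^n\). We need this under both hypotheses, so we must check that the relation \(Z=g(X)\) holds with probability one under \(Q_{XYZ}\) as well as under \(P_{XZ}P_{Y|Z}\). I would read the corollary's hypothesis as exactly this: the functional relation is a structural feature of the source shared by both hypotheses. If only \(P\) were constrained, I would add a remark that the deterministic reduction is still valid for the type-I error. For the type-II error under \(Q\), I would evaluate the scheme as if \(Z^n\) were replaced by \(g^n(X^n)\), and check that this can only help the exponent. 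The reason is that the receiver can also compare its observed \(Z^n\) against consistency with the decoded codeword's \(Z\)-component; I expect this to be unnecessary if the corollary is read as I intend.

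Given any scheme \((\tilde{f}_n, \varphi_n)\) for conditional testing against dependence, define \(f_n(x^n) \triangleq \tilde{f}_n(x^n, g^n(x^n))\) and keep the same test \(\varphi_n\). The joint distributions of \((M, Y^n, Z^n)\) under each hypothesis coincide in the two problems. Hence \(\alpha_n\) and \(\beta_n\) are identical, the rate is unchanged, and every pair \((R,E)\) achievable for conditional testing against dependence is achievable here. Therefore \(E(R) \geq \tilde{E}(R)\), and combined with Theorem \ref{thm:testing_against_conditional_dependence} this gives equality.

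The main obstacle is the justification discussed above, namely that the simulated \(Z^n\) agrees with the true \(Z^n\) under \(\mathcal{H}_1\) as well. Beyond that, the achievability of Theorem \ref{thm:conditional_testing_against_dependence} is used as a black box. Since \(Z=g(X)\), one could also simplify \(I(U;X|Z)=I(U;X)-I(U;Z)\), but the simplification is not needed.
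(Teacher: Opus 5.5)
Your proposal is correct and follows the paper's argument. The converse is Theorem~\ref{thm:testing_against_conditional_dependence}. Achievability holds because the transmitter can compute \(Z^n=g(X^n)\) itself, so the conditional quantization scheme of Section~\ref{sec:achievability_proof_conditional_testing_against_depedence} (indeed, as you note, any scheme) carries over with identical \(\alpha_n,\beta_n\).

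Your reading that \(Z=g(X)\) holds under both hypotheses is the one the paper implicitly uses. Drop the hedge about constraining only \(P\), though, because in that case the equality can fail. Take \(X\) uniform binary, \(Z=X\) under \(P\), \(Q_{XZ}=P_XP_Z\), and \(Y\) trivial. Then \(E(R)\leq R<\log 2\leq\tilde{E}(R)\) for \(0<R<\log 2\), since a one-bit check of \(Z^n=X^n\) at the transmitter already gives exponent \(\log 2\) in the conditional problem.
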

\begin{proof}
    Since \(Z^n = g(X^n)\), the transmitter effectively has access to the side information as well. Hence, we can establish \(E(R) \geq \tilde{E}(R)\) using the conditional quantization scheme in Section \ref{sec:achievability_proof_conditional_testing_against_depedence}.
\end{proof}

Besides establishing \(E(R)\) for this special case, Corollary \ref{cor:main_results:testing_against_conditional_dependence:mapping_case} also gives the following observations.
As seen earlier, the conditional quantization scheme improves upon the existing achievable schemes by using the fact that both the transmitter and receiver have access to the same side information. 
Corollary \ref{cor:main_results:testing_against_conditional_dependence:mapping_case} shows that it may also be applied when the side information is missing at the transmitter.
In particular, the improvement may be strict.

\section{Converse Proof of Theorem \ref{thm:testing_against_dependence}}
\label{sec:converse_proof_thm_testing_against_dependence}

We prove the converse part of Theorem \ref{thm:testing_against_dependence} in this section.
Consider a sequence of testing schemes \((f_n, \varphi_n)\) with \(||f_n|| \leq e^{nR}\), and denote by \(E\) the achievable error exponent under the sequence.
By applying the schemes to the two hypotheses respectively, we obtain two joint distributions \(P_{MX^n}P_{Y^n}\) and \(Q_{MX^nY^n}\), where \(P_{Y^n} = P_{Y}^n\).
Following the multi-letter characterization proof \cite[Theorem 1]{ahlswedeHypothesisTestingCommunication1986}, we see that
\begin{equation}
    E \ndot{\leq} \frac{1}{n} D(P_{M}P_{Y^n} \| Q_{MY^n}).  \label{eq:converse_proof_thm_testing_against_dependence:multi-letter_characterization}
\end{equation}
Our goal is to derive a single-letter upper bound for \eqref{eq:converse_proof_thm_testing_against_dependence:multi-letter_characterization}.

To this end, consider a distribution \(\tilde{P}_{MX^nY^n}\) that satisfies the following conditions
\begin{align}
    \tilde{P}_{MY^n} & = P_{M}P_{Y^n}, \label{eq:converse_proof_thm_testing_against_dependence:two_conditions_MXY_1} \\
    \tilde{P}_{MX^n} & = P_{MX^n}. \label{eq:converse_proof_thm_testing_against_dependence:two_conditions_MXY_2}
\end{align}
The reasons for imposing the two conditions will become clear shortly.
Notice that
\begin{align}
    & D(\tilde{P}_{MX^nY^n} \| Q_{MX^nY^n}) \nonumber \\
    & = D(P_{M}P_{Y^n} \tilde{P}_{X^n | M Y^n} \| Q_{MY^n} Q_{X^n|MY^n}) \label{eq:converse_proof_thm_testing_against_dependence:use_conditoin_MXY_1} \\
    & = D(P_{M}P_{Y^n} \| Q_{MY^n}) + D( \tilde{P}_{X^n | M Y^n} \| Q_{X^n|MY^n} | P_{M}P_{Y^n}) \\
    & \geq D(P_{M}P_{Y^n} \| Q_{MY^n}),
\end{align}
where \eqref{eq:converse_proof_thm_testing_against_dependence:use_conditoin_MXY_1} is due to the condition in \eqref{eq:converse_proof_thm_testing_against_dependence:two_conditions_MXY_1}.
Hence, we get
\begin{align}
    & D(P_{M}P_{Y^n} \| Q_{MY^n}) \nonumber \\
    & \leq D(\tilde{P}_{MX^nY^n} \| Q_{MX^nY^n}) \\
    & = D(P_{MX^n} \tilde{P}_{Y^n | MX^n} \| Q_{MX^n} Q_{Y^n|X^n} ) \label{eq:converse_proof_thm_testing_against_dependence:use_conditoin_MXY_2} \\
    & = D(P_{MX^n} \| Q_{MX^n}) + D(\tilde{P}_{Y^n|MX^n} \| Q_{Y^n | X^n} | P_{MX^n})  \label{eq:converse_proof_thm_testing_against_dependence:splitting_M_X} \\
    & = D(P_{X^n} \| Q_{X^n}) + D(\tilde{P}_{Y^n|MX^n} \| Q_{Y^n | X^n} | P_{MX^n}), \label{eq:converse_proof_thm_testing_against_dependence:split_P_X_Q_X} 
\end{align}
where \eqref{eq:converse_proof_thm_testing_against_dependence:use_conditoin_MXY_2} is due to the condition in \eqref{eq:converse_proof_thm_testing_against_dependence:two_conditions_MXY_2}; \eqref{eq:converse_proof_thm_testing_against_dependence:split_P_X_Q_X} is because \(P_{M|X^n} = Q_{M|X^n} = \idc\{M = f_n(X^n)\}\).
Notice that \eqref{eq:converse_proof_thm_testing_against_dependence:split_P_X_Q_X}  holds for any \(\tilde{P}_{Y^n|MX^n}\), as long as the joint distribution \(P_{MX^n}\tilde{P}_{Y^n|MX^n}\) satisfies \eqref{eq:converse_proof_thm_testing_against_dependence:two_conditions_MXY_1}.
Thus, the task reduces to choosing a suitable \(\tilde{P}_{Y^n|MX^n}\) that can yield the desired single-letter upper bound.

We choose a suitable \(\tilde{P}_{Y^n|MX^n}\) as follows.
Given the distribution \(P_{MX^n}\), we can obtain its marginal distribution \(P_{MX^{i}}\) for every \(i\in [n]\).
Let \(U_i \triangleq (M, X^{i-1})\) and hence \(P_{MX^{i}} = P_{U_i X_i}\).
For every \(i \in [n]\), we then select a \(\tilde{P}_{Y_i | U_i X_i}\) such that the joint distribution \(\tilde{P}_{U_i X_i Y_i} = P_{U_i X_i}\tilde{P}_{Y_i | U_i X_i}\) satisfies \(\tilde{P}_{U_i Y_i} = P_{U_i} P_{Y_i}\), i.e., \(\tilde{P}_{Y_i} = P_{Y_i}\) and also \(U_i\) and \(Y_i\) are independent under \(\tilde{P}_{U_i Y_i}\).
Such \(\tilde{P}_{Y_i | U_i X_i}\) must exist since we can at least choose \(\tilde{P}_{Y_i | U_i X_i} = P_{Y_i}\).
Consider the \(\tilde{P}_{Y^n|MX^n}\) defined by
\begin{equation}
    \tilde{P}_{Y^n|MX^n} \triangleq \prod_{i=1}^{n} \tilde{P}_{Y_i | U_i X_i} = \prod_{i=1}^{n} \tilde{P}_{Y_i | M X^{i-1}X_i},  \label{eq:converse_proof_thm_testing_against_dependence:define_suitable_product_distribution}
\end{equation}
where \(\tilde{P}_{Y_i | U_i X_i}\) is the one we just selected.
In order to substitute this \(\tilde{P}_{Y^n|MX^n}\) into \eqref{eq:converse_proof_thm_testing_against_dependence:split_P_X_Q_X}, we need to verify that \(\tilde{P}_{MX^nY^n} = P_{MX^n}\tilde{P}_{Y^n|MX^n}\) satisfies \eqref{eq:converse_proof_thm_testing_against_dependence:two_conditions_MXY_1}.
First, observe that
\begin{align}
    & \tilde{P}_{MY^n}  \nonumber \\
    & = \sum_{x^n} \tilde{P}_{MX^nY^n}\\
    & = P_{M}\sum_{x^n} P_{X^n | M } \tilde{P}_{Y^n | M X^n}  \\
    & = P_{M}\sum_{x^n} \prod_{i=1}^{n} P_{X_i | M X^{i-1} }  \prod_{i=1}^{n}\tilde{P}_{Y_i | M X^{i-1} X_i}  \\
    & = P_{M}\sum_{x^n} \prod_{i=1}^{n} P_{X_i | M X^{i-1} }  \tilde{P}_{Y_i | M X^{i-1} X_i} \\
    & = P_{M} \sum_{x^{n-1}} \prod_{i=1}^{n-1} P_{X_i | M X^{i-1} }  \tilde{P}_{Y_i | M X^{i-1} X_i} \times \big( \sum_{x_n} P_{X_n | M X^{n-1} }  \tilde{P}_{Y_n | M X^{n-1} X_n}  \big). \label{eq:converse_proof_thm_testing_against_dependence:sum_x_n}
\end{align}
For every fixed \((M, X^{n-1}) = (m, x^{n-1})\), we have
\begin{equation}
    \sum_{x_n} P_{X_n | M X^{n-1} }  \tilde{P}_{Y_n | M X^{n-1} X_n}  = \tilde{P}_{Y_n | M X^{n-1}}.
\end{equation}
Recall that \(U_n = (M, X^{n-1})\), and \(\tilde{P}_{Y_n | U_n X_n}\) is selected such that \(\tilde{P}_{U_nY_n} = P_{U_n}P_{Y_n} \).
We then see that
\begin{align}
    \sum_{x_n} P_{X_n | M X^{n-1} }  \tilde{P}_{Y_n | M X^{n-1} X_n}  & = \tilde{P}_{Y_n | M X^{n-1}} \\
    & = \tilde{P}_{Y_n | U_n} \\
    & = P_{Y_n}. \label{eq:converse_proof_thm_testing_against_dependence:independence_Y_U}
\end{align}
Substituting \eqref{eq:converse_proof_thm_testing_against_dependence:independence_Y_U} into \eqref{eq:converse_proof_thm_testing_against_dependence:sum_x_n} yields
\begin{align}
     & \tilde{P}_{MY^n} \nonumber \\
     & = P_{M} \times P_{Y_n} \times \sum_{x^{n-1}} \prod_{i=1}^{n-1} P_{X_i | M X^{i-1} } \tilde{P}_{Y_i | M X^{i-1} X_i} \\
     & = P_{M} \times P_{Y_n} \times \sum_{x^{n-2}} \prod_{i=1}^{n-2} P_{X_i | M X^{i-1} }  \tilde{P}_{Y_i | M X^{i-1} X_i} \times \big( \sum_{x_{n-1}} P_{X_{n-1} | M X^{n-2} }  \tilde{P}_{Y_{n-1} | M X^{n-2} X_{n-1}}   \big).
\end{align}
By repeating the backward summation, we get
\begin{equation}
    \tilde{P}_{MY^n} = P_{M} \times \prod_{i=1}^{n}P_{Y_i} = P_{M}P_{Y^n}.
\end{equation}
Therefore, the \(\tilde{P}_{Y_n | M X^n}\) defined in \eqref{eq:converse_proof_thm_testing_against_dependence:define_suitable_product_distribution} can be applied to the upper bound in \eqref{eq:converse_proof_thm_testing_against_dependence:split_P_X_Q_X}.

Before proceeding with the proof, we simplify the notation slightly for convenience.
We will write
\begin{equation}
    \sum_{x} g(Q_X) \triangleq \sum_{x \in \mathcal{X}} g(Q_X(x)),
\end{equation}
for any alphabet \(\mathcal{X}\), pmf \(Q_X\), and function \(g\).
Similar notation will also apply to, e.g.,
\begin{equation}
    \sum_{x , y} g(Q_X, Q_Y) \triangleq \sum_{x \in \mathcal{X} , y \in \mathcal{Y}} g(Q_X(x), Q_Y(y)).
\end{equation}
With the above in mind, we now apply the \(\tilde{P}_{Y_n | M X^n}\) defined in \eqref{eq:converse_proof_thm_testing_against_dependence:define_suitable_product_distribution}  to \eqref{eq:converse_proof_thm_testing_against_dependence:split_P_X_Q_X} and obtain that
\begin{align}
    & D(P_{M}P_{Y^n} \| Q_{MY^n}) \nonumber \\
    & \leq D(P_{X^n} \| Q_{X^n}) + D(\tilde{P}_{Y^n|MX^n} \| Q_{Y^n | X^n} | P_{MX^n}) \\
    & = D(P_{X^n} \| Q_{X^n}) + \sum_{m, x^n, y^n} \tilde{P}_{MX^nY^n} \log \frac{ \tilde{P}_{Y^n|MX^n}  }{ Q_{Y^n|X^n}} \\
    & = D(P_{X^n} \| Q_{X^n}) + \sum_{m, x^n, y^n} \tilde{P}_{MX^nY^n} \log \frac{ \prod_{i=1}^{n} \tilde{P}_{Y_i | U_i X_i} }{ \prod_{i=1}^{n} Q_{Y_i|X_i}} \\
    & = D(P_{X^n} \| Q_{X^n})  + \sum_{i=1}^{n} \sum_{u_i, x_i, y_i} \tilde{P}_{U_i X_i Y_i} \log \frac{ \tilde{P}_{Y_i | U_i X_i}  }{ Q_{Y_i|X_i} } \\
    & = D(P_{X^n} \| Q_{X^n})  + \sum_{i=1}^{n} D(\tilde{P}_{Y_i | U_i X_i} \| Q_{Y_i|X_i} | P_{U_i X_i}) \\
    & = D(P_{X^n} \| Q_{X^n}) + n D(\tilde{P}_{Y_J | U_J X_J J} \| Q_{Y_J|X_J J} | P_{U_J X_J J}) \label{eq:converse_proof_thm_testing_against_dependence:define_J}.\\
    & = D(P_{X^n} \| Q_{X^n}) + nD(\tilde{P}_{JU_JX_JY_J} \| P_{J U_J X_J} Q_{Y_J|X_J J}) \\
    & = D(P_{X^n} \| Q_{X^n}) + nD(\tilde{P}_{JU_JX_JY_J} \| P_{J U_J | X_J} P_{X_J} Q_{Y_J|X_J J}), \label{eq:converse_proof_thm_testing_against_dependence:single_letterize_J}
\end{align}
where in \eqref{eq:converse_proof_thm_testing_against_dependence:define_J}, \(J\)  is uniformly distributed over \([n]\), known as a time-sharing random variable.
Let \(\bar{U} \triangleq (J, U_J)\), and notice that \(P_{X_J} = P_{X}\), \(P_{Y_J} = P_{Y}\), and \(Q_{Y_J | X_J, J} = Q_{Y|X}\) due to the i.i.d. assumption.
Thus, we can write  the upper bound in \eqref{eq:converse_proof_thm_testing_against_dependence:single_letterize_J} as
\begin{align}
    & \frac{1}{n}D(P_{M}P_{Y^n} \| Q_{MY^n}) \nonumber \\
    & \leq \frac{1}{n}D(P_{X^n} \| Q_{X^n}) + D(\tilde{P}_{\bar{U}XY} \| P_{\bar{U} | X} P_{X} Q_{Y|X}) \\
    & = D(P_{X} \| Q_{X}) + D(\tilde{P}_{\bar{U}XY} \| P_{\bar{U}| X} P_{X} Q_{Y|X}) \\
    & = D(\tilde{P}_{\bar{U}XY} \| P_{\bar{U} | X} Q_{XY}). \label{eq:converse_proof_thm_testing_against_dependence:bound_in_U}
\end{align}
Since \(\tilde{P}_{U_i X_i Y_i} = P_{U_iX_i} \tilde{P}_{Y_i | U_i X_i}\) satisfies \(\tilde{P}_{U_iY_i} = P_{U_i}P_{Y_i}\) for every \(i\), we see that \(\tilde{P}_{\bar{U}XY}  \) satisfies
\begin{align}
    \tilde{P}_{\bar{U}X} & = P_{\bar{U}X}, \label{eq:converse_proof_thm_testing_against_dependence:marginal_condition_1} \\
    \tilde{P}_{\bar{U}Y} & = P_{\bar{U}}P_{Y}.\label{eq:converse_proof_thm_testing_against_dependence:marginal_condition_2}
\end{align}
We have freedom in selecting \(\tilde{P}_{Y_i | U_i X_i}\) and hence \(\tilde{P}_{U_iX_iY_i}\), so we can minimize over it for every \(i\) to obtain the tightest upper bound, yielding
\begin{align}
    \frac{1}{n}D(P_{M}P_{Y^n} \| Q_{MY^n}) & \leq \min_{\tilde{P}_{\bar{U}XY}} D(\tilde{P}_{\bar{U}XY} \| P_{\bar{U}|X}Q_{XY}) \label{eq:converse_proof_thm_testing_against_dependence:inner_minimization} \\
    & = E(P_{\bar{U}X}P_{Y} \| Q_{\bar{U} XY}), \label{eq:converse_proof_thm_testing_against_dependence:E_notation}
\end{align}
where in \eqref{eq:converse_proof_thm_testing_against_dependence:inner_minimization}, \(\tilde{P}_{\bar{U}XY}\) needs to satisfy the conditions in \eqref{eq:converse_proof_thm_testing_against_dependence:marginal_condition_1} and \eqref{eq:converse_proof_thm_testing_against_dependence:marginal_condition_2}.

From the familiar weak converse technique, we have
\begin{align}
    R & \geq \frac{1}{n} H(M) \\
    & \geq \frac{1}{n}I(M; X^n) \\
    & = \frac{1}{n}\sum_{i=1}^{n}I(M;X_i|X^{i-1}) \\
    & = \frac{1}{n} \sum_{i=1}^{n}I(M, X^{i-1};X_i) \\
    & = \frac{1}{n} \sum_{i=1}^{n}I(U_i;X_i)  \\
    & = I( U_J ; X_J | J) \\
    & = I(U_J, J; X_J) \\
    & = I(\bar{U}; X).
\end{align}
Therefore, we can upper bound \eqref{eq:converse_proof_thm_testing_against_dependence:E_notation} by maximizing over all \(U\) satisfying \(I(U;X) \leq R\) and conclude that
\begin{align}
    \frac{1}{n}D(P_{M}P_{Y^n} \| Q_{MY^n}) & \leq \max_{ \substack{P_{U|X}: \\ I(U;X) \leq R} } E(P_{UX}P_{Y} \| Q_{UXY}).
\end{align}
The cardinality bound \(|\mathcal{U}| \leq |\mathcal{X}| + 1\) is due to Han \cite[Theorem 3]{hanHypothesisTestingMultiterminal1987}.
With this, the converse proof is completed.

\section{Proof of Theorem \ref{thm:testing_against_product_dependence_independence}}
\label{sec:proof_testing_against_product_dependence_independence}

\subsection{Achievability and Simpliefied Expression}
Denote by \(E_{\mathrm{p}}(R)\) the RHS of \eqref{eq:main_results:testing_against_product_dependence_independence:exponent}.
Since we know \(E(R) \geq E_{\mathrm{H}}(R)\), our goal is to show \(E_{\mathrm{H}}(R) \geq E_{\mathrm{p}}(R)\).
Recall the definition of \(E_{\mathrm{H}}(R)\) in \eqref{eq:problem_setup:Han_exponent_definition_1} and \eqref{eq:problem_setup:Han_exponent_definition_2}.
Under the two hypotheses \(\mathcal{H}_0: P_{X_1}P_{Y_1}P_{X_2Y_2}\) and \(\mathcal{H}_1: Q_{X_1Y_1} Q_{X_2}Q_{Y_2}\), we have
\begin{align}
    E_{\mathrm{H}}(R) & = \max_{ \substack{P_{U_1U_2|X_1,X_2}: \\ I(U_1, U_2;X_1, X_2) \leq R }} E(P_{U_1U_2|X_1X_2}P_{X_1}P_{Y_1}P_{X_2Y_2} \|P_{U_1U_2|X_1X_2} Q_{X_1Y_1} Q_{X_2}Q_{Y_2}) \label{eq:proof_testing_against_product_dependence_independence:achievability:Han_exponent_U_1_U_2} \\
    & \geq \max_{ \substack{P_{U_1|X_1}, P_{U_2|X_2}: \\ I(U_1;X_1) + I(U_2;X_2) \leq R }} E(P_{U_1X_1}P_{Y_1}P_{U_2X_2Y_2} \|Q_{U_1X_1Y_1} Q_{U_2X_2}Q_{Y_2}) ,\label{eq:proof_testing_against_product_dependence_independence:achievability:product_U_1_U_2}
\end{align}
where in \eqref{eq:proof_testing_against_product_dependence_independence:achievability:Han_exponent_U_1_U_2}, we let \(U = (U_1, U_2)\) without loss of generality; \eqref{eq:proof_testing_against_product_dependence_independence:achievability:product_U_1_U_2} is because we now only consider the product case \(P_{U_1U_2|X_1X_2} = P_{U_1|X_1}P_{U_2|X_2}\).
To proceed, let us recall that
\begin{align}
    D(P_{XY} \| Q_{X}Q_{Y}) & = D(P_{X} \| Q_{X}) + D(P_{Y|X}\|Q_{Y}|P_X) \\
    & \geq D(P_{X} \| Q_{X}) + D(P_{Y}\|Q_{Y}), \label{eq:eq:proof_testing_against_product_dependence_independence:achievability:divergence_rule}
\end{align}
where \eqref{eq:eq:proof_testing_against_product_dependence_independence:achievability:divergence_rule} is due to the data processing inequality.
Then, we see that
\begin{align}
    & E(P_{U_1X_1}P_{Y_1}P_{U_2X_2Y_2} \|Q_{U_1X_1Y_1} Q_{U_2X_2}Q_{Y_2}) \nonumber \\
    & = \min_{ \substack{ \tilde{P}_{U_1U_2X_1X_2Y_1Y_2} \in \\ \mathcal{P}_{\mathrm{H}}(P_{U_1X_1}P_{Y_1}P_{U_2X_2Y_2}) }} D(\tilde{P}_{U_1U_2X_1X_2Y_1Y_2} \| Q_{U_1X_1Y_1}Q_{U_2X_2}Q_{Y_2} ) \\
    & \geq \min_{ \substack{ \tilde{P}_{U_1U_2X_1X_2Y_1Y_2} \in \\ \mathcal{P}_{\mathrm{H}}(P_{U_1X_1}P_{Y_1}P_{U_2X_2Y_2}) }} D(\tilde{P}_{U_1X_1Y_1} \| Q_{U_1X_1Y_1} ) + D(\tilde{P}_{U_2X_2Y_2} \| Q_{U_2X_2}Q_{Y_2}) \label{eq:eq:proof_testing_against_product_dependence_independence:achievability:applying_divergence_rule}\\
    & \geq \min_{ \substack{ \tilde{P}_{U_1X_1Y_1} \in \\ \mathcal{P}_{\mathrm{H}}(P_{U_1X_1}P_{Y_1}) }} D(\tilde{P}_{U_1X_1Y_1} \| Q_{U_1X_1Y_1} ) + \min_{ \substack{ \tilde{P}_{U_2X_2Y_2} \in \\ \mathcal{P}_{\mathrm{H}}(P_{U_2X_2Y_2}) }} D(\tilde{P}_{U_2X_2Y_2} \|Q_{U_2X_2}Q_{Y_2}) \label{eq:eq:proof_testing_against_product_dependence_independence:achievability:definition_calP_H} \\
    & = E(P_{U_1X_1}P_{Y_1} \| Q_{U_1X_1Y_1}) +  E(P_{U_2X_2Y_2} \| Q_{U_2X_2}Q_{Y_2}), \label{eq:eq:proof_testing_against_product_dependence_independence:achievability:definition_E_divergence} 
\end{align}
where \eqref{eq:eq:proof_testing_against_product_dependence_independence:achievability:applying_divergence_rule} is due to \eqref{eq:eq:proof_testing_against_product_dependence_independence:achievability:divergence_rule}; \eqref{eq:eq:proof_testing_against_product_dependence_independence:achievability:definition_calP_H} is because \(\min_{x} [f(x) + g(x)] \geq \min_{x}f(x) + \min_{x} g(x)\) and the definition of \(\mathcal{P}_{\mathrm{H}}\) in \eqref{eq:problem_setup:Han_exponent_definition_2}.
We obtain \(E_{\mathrm{H}}(R) \geq E_{\mathrm{p}}(R)\) after substituting \eqref{eq:eq:proof_testing_against_product_dependence_independence:achievability:definition_E_divergence} into \eqref{eq:proof_testing_against_product_dependence_independence:achievability:product_U_1_U_2}.

\begin{remark}
    Following the approach in \cite[Section VI]{watanabeSuboptimalityRandomBinning2022}, we can establish the achievability of \(E_{\mathrm{p}}(R)\)  more directly as follows: we separate the two components \((X_1^n, Y_1^n)\) and \((X_2^n, Y_2^n)\), and apply the quantization scheme to each component using \(P_{U_1|X_1}\) and \(P_{U_2|X_2}\) independently; we accept \(\mathcal{H}_0\) only if both schemes accept \(\mathcal{H}_0\).
    Since the decisions of the two schemes are independent, the type-II error probability of the above scheme is given by the product of the type-II error probabilities of the two individual schemes, yielding \(E_{\mathrm{p}}(R)\).
\end{remark}

\subsection{Converse}
We next establish \(E(R) \leq E_{\mathrm{p}}(R)\).
Consider a sequence of testing schemes \((f_n, \varphi_n)\) with \(||f_n|| \leq e^{nR}\), and denote by \(E\) the achievable error exponent under the sequence.
By applying the schemes to the two hypotheses  \(\mathcal{H}_0: P_{X_1}P_{Y_1}P_{X_2Y_2}\) and \(\mathcal{H}_1: Q_{X_1Y_1} Q_{X_2}Q_{Y_2}\) respectively, we obtain two joint distributions \(P_{M}P_{Y_1^n}P_{Y_2^n|M}\) and \(Q_{M}Q_{Y_1^n|M}Q_{Y_2^n}\).
Following the multi-letter characterization proof \cite[Theorem 1]{ahlswedeHypothesisTestingCommunication1986}, we see that
\begin{align}
    E &\ndot{\leq} \frac{1}{n} D( P_{M}P_{Y_1^n}P_{Y_2^n|M} \| Q_{M}Q_{Y_1^n|M}Q_{Y_2^n}) \\
    & = \frac{1}{n} D( P_{M}P_{Y_1^n} \| Q_{MY_1^n}) + \frac{1}{n} D( P_{MY_2^n} \| P_{M}Q_{Y_2^n}). \label{eq:proof_testing_against_product_dependence_independence:converse:divergence}
\end{align}
Using the same arguments as in Section \ref{sec:converse_proof_thm_testing_against_dependence}, and in particular \eqref{eq:converse_proof_thm_testing_against_dependence:splitting_M_X}, we get
\begin{align}
    & D( P_{M}P_{Y_1^n} \| Q_{MY_1^n}) \nonumber \\
    & \leq D(P_{MX_1^n} \| Q_{MX_1^n}) + D(\tilde{P}_{Y_1^n|MX_1^n} \| Q_{Y_1^n | X_1^n} | P_{MX_1^n}) \\
    & \leq D(P_{MX_1^nX_2^n} \| Q_{MX_1^nX_2^n}) + D(\tilde{P}_{Y_1^n|MX_1^n} \| Q_{Y_1^n | X_1^n} | P_{MX_1^n}) \label{eq:proof_testing_against_product_dependence_independence:converse:data_processing}\\
    & = D(P_{X_1^nX_2^n} \| Q_{X_1^nX_2^n}) + D(\tilde{P}_{Y_1^n|MX_1^n} \| Q_{Y_1^n | X_1^n} | P_{MX_1^n}) \label{eq:proof_testing_against_product_dependence_independence:converse:encoding_mapping}\\
    & = D(P_{X_1^n} \| Q_{X_1^n}) + D(P_{X_2^n} \| Q_{X_2^n}) + D(\tilde{P}_{Y_1^n|MX_1^n} \| Q_{Y_1^n | X_1^n} | P_{MX_1^n}), \label{eq:proof_testing_against_product_dependence_independence:converse:M_X_1_X_2}
\end{align}
where \eqref{eq:proof_testing_against_product_dependence_independence:converse:data_processing} is due to the data processing inequality; \eqref{eq:proof_testing_against_product_dependence_independence:converse:encoding_mapping} is because \(P_{M|X_1^n X_2^n} = Q_{M|X_1^n X_2^n} = \idc\{M = \tilde{f}_n(X_1^n, X_2^n)\}\).
Substituting \eqref{eq:proof_testing_against_product_dependence_independence:converse:M_X_1_X_2} into \eqref{eq:proof_testing_against_product_dependence_independence:converse:divergence} yields that
\begin{align}
    E &\ndot{\leq} \frac{1}{n}D(P_{X_1^n} \| Q_{X_1^n}) + \frac{1}{n}D(P_{X_2^n} \| Q_{X_2^n}) + \frac{1}{n}D(\tilde{P}_{Y_1^n|MX_1^n} \| Q_{Y_1^n | X_1^n} | P_{MX_1^n}) + \frac{1}{n} D( P_{MY_2^n} \| P_{M}Q_{Y_2^n}) \\
    & =  \frac{1}{n}D(P_{X_1^n} \| Q_{X_1^n}) + \frac{1}{n}D(\tilde{P}_{Y_1^n|MX_1^n} \| Q_{Y_1^n | X_1^n} | P_{MX_1^n})  \nonumber \\
     & \hspace{3cm} + \frac{1}{n}D(P_{X_2^n} \| Q_{X_2^n}) + \frac{1}{n} D(P_{Y_2^n} \| Q_{Y_2^n}) + \frac{1}{n} I(M;Y_2^n). \label{eq:proof_testing_against_product_dependence_independence:converse:sum_two_parts}
\end{align}
On the other hand, we have
\begin{align}
    R & \geq \frac{1}{n}H (M) \\
    & \geq \frac{1}{n}I(M; X_1^n, X_2^n) \\
    & = \frac{1}{n}I(M;X_1^n) + \frac{1}{n} I(M; X_2^n|X_1^n) \\
    & \geq \frac{1}{n}I(M;X_1^n) + \frac{1}{n} I(M; X_2^n), \label{eq:proof_testing_against_product_dependence_independence:converse:rate}
\end{align}
where \eqref{eq:proof_testing_against_product_dependence_independence:converse:rate} is because \(X_1^n\) and \(X_2^n\) are independent.
From \eqref{eq:proof_testing_against_product_dependence_independence:converse:sum_two_parts} and \eqref{eq:proof_testing_against_product_dependence_independence:converse:rate}, we now can establish \(E(R) \leq E_{\mathrm{p}}(R)\) by applying the reasoning in Section \ref{sec:converse_proof_thm_testing_against_dependence} and Appendix \ref{apd:converse_proof_generalization_testing_against_independence} to \((X_1^n, Y_1^n)\) and \((X_2^n, Y_2^n)\) separately, in particular the arguments following \eqref{eq:converse_proof_thm_testing_against_dependence:split_P_X_Q_X} and \eqref{eq:apd:converse_proof_generalization_testing_against_independence:M_Y^n}.
The cardinality bounds \(|\mathcal{U}_1| \leq |\mathcal{X}_1| + 1\) and \(|\mathcal{U}_2| \leq |\mathcal{X}_2| + 1\) also follow from applying Han's proof of \cite[Theorem 3]{hanHypothesisTestingMultiterminal1987} to \(U_1\) and \(U_2\) separately.
With this, the converse proof is completed.

\section{Achievability Proof of Theorem \ref{thm:conditional_testing_against_dependence}}
\label{sec:achievability_proof_conditional_testing_against_depedence}

Consider \(P_{XYZ} = P_{XZ}P_{Y|Z}\), and recall that \(P_{UXYZ} = P_{U|XZ}P_{XYZ}\) and \(Q_{UXYZ} = P_{U|XZ} Q_{XYZ}\).
In this section, we show that for conditional testing against dependence, we have
\begin{equation}
    \tilde{E}(R) \geq \max_{\substack{P_{U|XZ}: \\ I(U;X|Z) \leq R }} \tilde{E}(P_{UXYZ} \| Q_{UXYZ}).
\end{equation}
Before proceeding to the proof, we introduce some notation and discuss a preliminary result that will be useful down the line.

\subsection{Notation and Preliminaries}
Given a pmf \(P_{X}\) and small constant \(\epsilon > 0\), we write \(Q_X \overset{\epsilon}{\sim} P_{X}\) if
\begin{equation}
    |Q_X(a) - P_{X}(a)| \leq \epsilon P_{X}(a), \qquad \forall a \in \mathcal{X}.
\end{equation}
We define the \(\epsilon\)-typical set \(\mathcal{T}_n^{\epsilon}(P_X)\) as the set of all \(\bm{x} \in \mathcal{X}^n\) satisfying \(\hat{P}_{\bm{x}} \overset{\epsilon}{\sim} P_{X}\).
For a joint pmf \(P_{XY}\) and an \(\bm{x} \in \mathcal{T}_n^{\epsilon}(P_X)\), we define the conditionally  \(\epsilon\)-typical set \(\mathcal{T}_n^{\epsilon}(P_{Y|X} | \bm{x})\) as
\begin{equation}
    \mathcal{T}_n^{\epsilon}(P_{Y|X} | \bm{x}) \triangleq \{ \bm{y} \in \mathcal{Y}^n: (\bm{x}, \bm{y}) \in \mathcal{T}_n^{\epsilon}(P_{XY}) \}.
\end{equation}
The typicality notion adopted here is known as robust typicality; see \cite[Chapter 2]{gamalNetworkInformationTheory2011} for its properties.

We next review a key tool, known as the type covering lemma, that will be used in the design of the coding scheme further on.
The type covering lemma was proposed by Berger \cite{bergerRateDistortionTheory1971} to show the existence of good codes for lossy source coding.
To account for the additional side information sequence \(Z^n\) in our setting, we need the following modified version.
Note that a similar version has appeared in the literature, see \cite[Lemma 3.8]{graczykGuessWhat2021}.

\begin{lemma}[Conditional Type Covering]
\label{lem:condition_type_covering}
For every pmf \(P_{UXZ}\) and sequence \(\bm{z} \in \mathcal{T}_n^{\epsilon}(P_{Z})\), there exists a set \(\mathcal{A}_n \subseteq \mathcal{T}_n^{\epsilon}(P_{U|Z} | \bm{z}) \) with
\begin{equation}
    |\mathcal{A}_n| \ndot{\leq} e^{nI(U;X|Z)}
\end{equation}
such that for every \(\bm{x} \in \mathcal{T}_n^{\epsilon}(P_{X|Z} | \bm{z})\) we can find \(\bm{u} \in \mathcal{A}_n\) satisfying \((\bm{u}, \bm{x}, \bm{z}) \in \mathcal{T}_n^{\epsilon}(P_{UXZ})\).
\end{lemma}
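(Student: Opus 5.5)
We will prove the lemma by a random coding argument, done for a fixed \(\bm{z}\) and independently of any particular \(\bm{x}\), followed by a union bound over the conditional typical set of \(\bm{x}\).

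\textbf{Setup.} Fix \(\bm{z} \in \mathcal{T}_n^{\epsilon}(P_Z)\). Choose a slightly smaller typicality parameter \(\epsilon' < \epsilon\) where it helps the conditional typicality lemma. Draw \(N = e^{n(I(U;X|Z)+\delta)}\) codewords \(\bm{U}(1),\dots,\bm{U}(N)\) independently. Each codeword is drawn uniformly from \(\mathcal{T}_n^{\epsilon}(P_{U|Z}|\bm{z})\); alternatively, draw it from \(\prod_i P_{U|Z}(\cdot|z_i)\) and discard the atypical ones. The resulting set \(\mathcal{A}_n\) is a subset of \(\mathcal{T}_n^{\epsilon}(P_{U|Z}|\bm{z})\) by construction, and its cardinality satisfies \(|\mathcal{A}_n| \le N \ndot{\le} e^{nI(U;X|Z)}\) once \(\delta \to 0\).

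\textbf{Covering probability for a fixed \(\bm{x}\).} Fix \(\bm{x} \in \mathcal{T}_n^{\epsilon}(P_{X|Z}|\bm{z})\). We need a lower bound on the probability that a single codeword \(\bm{U}\) satisfies \((\bm{U},\bm{x},\bm{z}) \in \mathcal{T}_n^{\epsilon}(P_{UXZ})\). Here the robust-typicality properties in \cite[Chapter 2]{gamalNetworkInformationTheory2011} apply. The set \(\mathcal{T}_n^{\epsilon}(P_{U|XZ}|\bm{x},\bm{z})\) has size at least \(e^{n(H(U|X,Z)-\delta(\epsilon))}\); this holds because \((\bm{x},\bm{z})\) is jointly typical, possibly after shrinking the parameter. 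Each of its elements lies in \(\mathcal{T}_n^{\epsilon}(P_{U|Z}|\bm{z})\), which has size at most \(e^{n(H(U|Z)+\delta(\epsilon))}\). So the probability is at least \(e^{-n(I(U;X|Z)+2\delta(\epsilon))}\).

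A cleaner route is to work with exact types and conditional types. Replace \(P_{UXZ}\) by a nearby joint type \(P'_{UXZ}\) whose conditional \(P'_{U|XZ}\) is consistent with the joint type \(\hat{P}_{\bm{x}\bm{z}}\), and sample uniformly from the conditional type class \(\mathcal{T}_n(P'_{U|Z}|\bm{z})\). The method of types then gives the ratio \(|\mathcal{T}_n(P'_{U|XZ}|\bm{x},\bm{z})|/|\mathcal{T}_n(P'_{U|Z}|\bm{z})| \ndot{\ge} e^{-nI(U;X|Z)}\), by continuity of mutual information.

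\textbf{Union bound.} The probability that no codeword covers \(\bm{x}\) is at most
\[
\bigl(1-e^{-n(I(U;X|Z)+2\delta(\epsilon))}\bigr)^N \le \exp\bigl(-e^{n(\delta-2\delta(\epsilon))}\bigr),
\]
which is doubly exponentially small when \(\delta > 2\delta(\epsilon)\). We take a union bound over at most \(|\mathcal{X}|^n\) sequences \(\bm{x}\). The probability that some \(\bm{x}\) is left uncovered then tends to zero, so a deterministic \(\mathcal{A}_n\) with the required property exists. Since \(\delta\) and \(\delta(\epsilon)\) are arbitrary and vanish, this gives \(|\mathcal{A}_n| \ndot{\le} e^{nI(U;X|Z)}\), with the exponent understood up to slack that vanishes as \(\epsilon \to 0\), as is usual for the \(\ndot{\le}\) statement.

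\textbf{Main obstacle.} The delicate step is the lower bound on the covering probability, uniformly over every \(\bm{x}\) in the \(\epsilon\)-typical set rather than only for typical-in-probability \(\bm{x}\). With robust typicality, the boundary sequences \(\bm{x}\) (those with \(\hat{P}_{\bm{x}\bm{z}}\) at distance exactly \(\epsilon\)) may not extend to \(\epsilon\)-jointly typical triples with many \(\bm{u}\). To handle this, I would first try the type-based construction. For each joint type of \((\bm{x},\bm{z})\) in the typical set, build a code from the conditional type \(\hat{P}_{\bm{x}\bm{z}}\cdot P_{U|XZ}\), rounded. Then take \(\mathcal{A}_n\) as the union over these polynomially many types, which does not change the exponent.
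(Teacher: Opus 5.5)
Your core argument is the same random-coding argument the paper appeals to: codewords drawn uniformly from \(\mathcal{T}_n^{\epsilon}(P_{U|Z}|\bm{z})\), as in \cite[Lemma 3.8]{graczykGuessWhat2021}. Your size estimates and the doubly exponential union bound are correct once \(\bm{x}\) is restricted to \(\mathcal{T}_n^{\epsilon'}(P_{X|Z}|\bm{z})\) with \(\epsilon'<\epsilon\), which is what ``possibly after shrinking the parameter'' silently does. The step that fails is your repair for boundary sequences. Rounding \(\hat{P}_{\bm{x}\bm{z}}P_{U|XZ}\) to a joint type cannot work, because at the boundary there may be no \(\bm{u}\) at all with \((\bm{u},\bm{x},\bm{z})\in\mathcal{T}_n^{\epsilon}(P_{UXZ})\). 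For example, let \(Z\) be constant, \(\mathcal{X}=\mathcal{U}=\{0,1\}\), \(P_X(1)=1/2\), \(P_{U|X}(0|1)=1/\sqrt{2}\), \(\epsilon=1/2\), and let \(4\) divide \(n\). A sequence \(\bm{x}\) with exactly \(3n/4\) ones has \(\hat{P}_{\bm{x}}(1)=3/4=(1+\epsilon)P_X(1)\), so it lies in \(\mathcal{T}_n^{\epsilon}(P_X)\). Any covering \(\bm{u}\) would need \(\hat{P}_{\bm{u}\bm{x}}(c,1)\le(1+\epsilon)P_{UX}(c,1)=\tfrac{3}{4}P_{U|X}(c|1)\) for \(c=0,1\). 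The left-hand sides sum to \(\hat{P}_{\bm{x}}(1)=\tfrac{3}{4}\), so both must hold with equality. This forces \(\hat{P}_{\bm{u}\bm{x}}(0,1)=3/(4\sqrt{2})\), which is not a multiple of \(1/n\). So the lemma with the same \(\epsilon\) on \(\bm{x}\) and on the triple fails for infinitely many \(n\), and your ``main obstacle'' cannot be overcome as posed.

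The fix belongs in the statement, not the proof. State the lemma as covering every \(\bm{x}\in\mathcal{T}_n^{\epsilon'}(P_{X|Z}|\bm{z})\), \(\epsilon'<\epsilon\), by a set with \(|\mathcal{A}_n|\le e^{n(I(U;X|Z)+\delta(\epsilon))}\). This is the form of the covering lemma in \cite{gamalNetworkInformationTheory2011}, and it is exactly what your main line proves. The \(\delta(\epsilon)\) slack you added is likewise necessary rather than conventional. If \(U=X\), every covering \(\bm{u}\) must equal \(\bm{x}\), so \(\mathcal{A}_n\supseteq\mathcal{T}_n^{\epsilon}(P_{X|Z}|\bm{z})\). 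That set is exponentially larger than \(e^{nH(X|Z)}=e^{nI(U;X|Z)}\) when, e.g., \(Z\) is constant and \(P_X\) has full support but is not uniform. This weakened lemma is all the paper's scheme needs: the transmitter tests \((\bm{x},\bm{z})\) with the smaller parameter, the receiver already uses one larger than \(\epsilon\), and \(\epsilon\to 0\) at the end. The paper's one-line proof does not address either point.
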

\begin{proof}
    Under the robust typicality, if \((\bm{u}, \bm{x}, \bm{z}) \in \mathcal{T}_n^{\epsilon}(P_{UXZ})\), then we must have \(\bm{u} \in \mathcal{T}_n^{\epsilon}(P_{U|Z} | \bm{z})\).
    Hence, Lemma \ref{lem:condition_type_covering} can be proved using random coding over the set \(\mathcal{T}_n^{\epsilon}(P_{U|Z} | \bm{z})\), following a similar argument as in the proof of \cite[Lemma 3.8]{graczykGuessWhat2021}.
\end{proof}
With the above prelude, we now move on to the main part of the proof.

\subsection{Coding Scheme and Error Analysis}
We first present the coding scheme.
Recall the two hypothesis \(\mathcal{H}_0: P_{XYZ}\) and \(\mathcal{H}_1: Q_{XYZ}\).
We begin by selecting a conditional distribution \(P_{U|XZ}\) such that \(I(U;X|Z) \leq R\) under \(P_{UXYZ} = P_{U|XZ}P_{XYZ}\).
For every \(\bm{z} \in \mathcal{T}_n^{\epsilon}(P_{Z})\), we select a codebook \(\mathcal{C}_n(\bm{z}) = \{\bm{u}_1, \bm{u}_2, \ldots \} \subseteq \mathcal{T}_n^{\epsilon}(P_{U|Z}|\bm{z})\) according to Lemma \ref{lem:condition_type_covering}.
The collection of the codebooks \(\{\mathcal{C}_n(\bm{z})\}_{\bm{z} \in \mathcal{T}_n^{\epsilon}(P_{Z}) }\) is available to both the transmitter and the receiver.
The coding procedure now proceeds as follows.

The transmitter first observes \((\bm{x}, \bm{z})\) and examines whether \((\bm{x}, \bm{z}) \in \mathcal{T}_n^{\epsilon}(P_{XZ})\) or not. We will count the event \((\bm{x}, \bm{z}) \notin \mathcal{T}_n^{\epsilon}(P_{XZ})\) as a type-II error event in our analysis, i.e., the receiver will decide \(\mathcal{H}_1\) is true in this case.
If \((\bm{x}, \bm{z}) \in \mathcal{T}_n^{\epsilon}(P_{XZ})\), then the transmitter identifies a codeword \(\bm{u}_{m} \in \mathcal{C}_n(\bm{z})\) such that \((\bm{u}_{m}, \bm{x}, \bm{z}) \in \mathcal{T}_n^{\epsilon}(P_{UXZ})\).
If there are multiple such codewords, the transmitter selects one of them arbitrarily.
The transmitter sends \(M = m\) to the receiver.
Since \(I(U;X|Z) \leq R\), we see \(|\mathcal{C}_n(\bm{z})| \ndot{\leq} e^{nR}\) and hence the rate limit \(R\) is satisfied asymptotically.
Given the index \(m\) and observation \(\bm{z}\), the receiver can determine \(\bm{u}_{m} \in \mathcal{C}_n(\bm{z})\).
The receiver decides \(\mathcal{H}_0\) is true if \((\bm{u}_m, \bm{y}, \bm{z}) \in \mathcal{T}_n^{\epsilon^{\prime}}(P_{UYZ})\)  for an \(\epsilon^{\prime} > \epsilon\).

By the conditional typicality lemma in \cite[Chapter 2]{gamalNetworkInformationTheory2011}, we can verify that the type-I error probability \(\alpha_n\)  of the above scheme goes to \(0\) as \(n\) grows.
We henceforth focus on analyzing the type-II error probability, i.e., the probability that the receiver declares \(\mathcal{H}_0\) when \(\mathcal{H}_1\)  is true.
For simplicity, we will write \(\bm{U} \triangleq \bm{u}_{M}\) in the following, which denotes the quantization codeword.
Suppose \(\mathcal{H}_1: Q_{XYZ}\) is true. 
It follows that
\begin{align}
    \beta_n & = \Pr \{ \varphi_n( \bm{Y}, \bm{Z}, M ) = \mathcal{H}_0  \} \\
    &  =  \Pr \{ (\bm{X}, \bm{Z}) \in \mathcal{T}_n^{\epsilon}(P_{XZ}),  (\bm{U}, \bm{Y}, \bm{Z}) \in \mathcal{T}_n^{\epsilon^{\prime}}(P_{UYZ}) \} \\
    & =   \Pr \{ (\bm{X}, \bm{Z}) \in \mathcal{T}_n^{\epsilon}(P_{XZ}) \}  \Pr  \{ (\bm{U}, \bm{Y}, \bm{Z}) \in \mathcal{T}_n^{\epsilon^{\prime}}(P_{UYZ}) | (\bm{X}, \bm{Z}) \in \mathcal{T}_n^{\epsilon}(P_{XZ}) \}. \label{eq:conditiona_testing_against_dependence:achievability:conditional_probability_expansion}
\end{align}
It is clear that
\begin{equation}
    \Pr \{ (\bm{X}, \bm{Z}) \in \mathcal{T}_n^{\epsilon}(P_{XZ}) \} \ndot{=} e^{-nD(P_{XZ} \| Q_{XZ})}. \label{eq:conditiona_testing_against_dependence:achievability:conditional_probability_bound_1}
\end{equation}
On the other hand, since we know \((\bm{U}, \bm{X}, \bm{Z}) \in \mathcal{T}_n^{\epsilon}(P_{UXZ})\), we see that
\begin{align}
    & \Pr  \{ (\bm{U}, \bm{Y}, \bm{Z}) \in \mathcal{T}_n^{\epsilon^{\prime}}(P_{UYZ}) | (\bm{X}, \bm{Z}) \in \mathcal{T}_n^{\epsilon}(P_{XZ}) \} \nonumber \\
    & = \Pr \{ \bm{Y}\in \mathcal{T}_n^{\epsilon^{\prime}}(P_{Y|UZ}| \bm{U}, \bm{Z}) | (\bm{U}, \bm{X}, \bm{Z}) \in \mathcal{T}_n^{\epsilon}(P_{UXZ}) \}
\end{align}
Notice that we can write
\begin{equation}
    \mathcal{T}_n^{\epsilon^{\prime}}(P_{Y|UZ}| \bm{U}, \bm{Z}) = \bigcup_{\tilde{P}_{Y|UXZ}} \mathcal{T}_n(\tilde{P}_{Y|UXZ}| \bm{U}, \bm{X}, \bm{Z}),
\end{equation}
where the union is taken over all conditional type \(\tilde{P}_{Y|UXZ}\) such that \(\tilde{P}_{UXYZ} = \tilde{P}_{Y|UXZ} \hat{P}_{\bm{U} \bm{X} \bm{Z}}\) satisfies  \(\tilde{P}_{Y|UZ}\hat{P}_{\bm{U}\bm{Z}} \overset{\epsilon^{\prime}}{\sim} P_{Y|UZ}\hat{P}_{\bm{U}\bm{Z}}\).
Thus, under \(\mathcal{H}_1: Q_{XYZ}\), we get
\begin{align}
    &  \Pr \{ \bm{Y}\in \mathcal{T}_n^{\epsilon^{\prime}}(P_{Y|UZ}| \bm{U}, \bm{Z}) | (\bm{U}, \bm{X}, \bm{Z}) \in \mathcal{T}_n^{\epsilon}(P_{UXZ}) \} \nonumber \\
    & \leq \sum_{\tilde{P}_{Y|UXZ}} Q_{Y|XZ}^n[ \bm{Y}\in \mathcal{T}_n^{\epsilon^{\prime}}(\tilde{P}_{Y|UXZ}| \bm{U}, \bm{X}, \bm{Z}) | (\bm{U}, \bm{X}, \bm{Z}) \in \mathcal{T}_n^{\epsilon}(P_{UXZ})  ] \\
    & \ndot{=} \sum_{\tilde{P}_{Y|UXZ}} e^{-nD(\tilde{P}_{Y|UXZ} \| Q_{Y|XZ} | P_{UXZ})} \\
    & = \sum_{ \substack{\tilde{P}_{UXYZ} \in \tilde{\mathcal{P}}_{\mathrm{H}}(P_{UXYZ})  }} e^{-nD(\tilde{P}_{UXYZ} \| P_{UXZ}Q_{Y|XZ})}. \label{eq:conditiona_testing_against_dependence:achievability:conditional_probability_bound_2}
\end{align}
Note that
\begin{equation}
    D(P_{XZ} \| Q_{XZ}) + D(\tilde{P}_{UXYZ} \| P_{UXZ}Q_{Y|XZ}) = D(\tilde{P}_{UXYZ} \| P_{U|XZ}Q_{XYZ}).
\end{equation}
Hence, substituting \eqref{eq:conditiona_testing_against_dependence:achievability:conditional_probability_bound_1} and \eqref{eq:conditiona_testing_against_dependence:achievability:conditional_probability_bound_2} into \eqref{eq:conditiona_testing_against_dependence:achievability:conditional_probability_expansion} yields
\begin{align}
     \beta_n & \ndot{\leq}  \sum_{ \substack{\tilde{P}_{UXYZ} \in \tilde{\mathcal{P}}_{\mathrm{H}}(P_{UXYZ})  }} e^{-nD(\tilde{P}_{UXYZ} \| P_{U|XZ}Q_{XYZ})} \\
     & \ndot{=} \max_{ \substack{\tilde{P}_{UXYZ} \in \tilde{\mathcal{P}}_{\mathrm{H}}(P_{UXYZ})  }} e^{-nD(\tilde{P}_{UXYZ} \| P_{U|XZ}Q_{XYZ})} \\
     & = e^{-nE(\tilde{P}_{UXYZ} \| Q_{UXYZ})} \\
     & = \min_{\substack{P_{U|XZ}: \\ I(U;X|Z) \leq R}} e^{-nE(\tilde{P}_{UXYZ} \| Q_{UXYZ})}, \label{eq:conditiona_testing_against_dependence:achievability:optimization_over_P_UXZ}
\end{align}
where in \eqref{eq:conditiona_testing_against_dependence:achievability:optimization_over_P_UXZ}, we assume that in the coding scheme \(P_{U|XZ}\) is chosen to minimize the resulting upper bound.
This completes the achievability proof of Theorem \ref{thm:conditional_testing_against_dependence}.

\section{Converse Proof of Theorem \ref{thm:conditional_testing_against_dependence}}
\label{sec:converse_proof_conditional_testing_against_depedence}

Now that the achievability of Theorem \ref{thm:conditional_testing_against_dependence} is proved, we turn to the converse in this section.
The converse proof will closely follow that in Section \ref{sec:converse_proof_thm_testing_against_dependence}, with modifications to incorporate the additional side information sequence \(Z^n\).
We begin with the multi-letter characterization of the error exponent.
Consider a sequence of testing schemes \((\tilde{f}_n, \varphi_n)\) with \(||\tilde{f}_n|| \leq e^{nR}\), and denote by \(E\) the achievable error exponent under the sequence.
By applying the schemes to the two hypotheses respectively, we obtain two joint distributions \(P_{MX^nZ^n}P_{Y^n|Z^n}\) and \(Q_{MX^nY^nZ^n}\), where \(P_{Y^n|Z^n} = P_{Y|Z}^n\).
Following the multi-letter characterization proof \cite[Theorem 1]{ahlswedeHypothesisTestingCommunication1986}, we see that
\begin{equation}
    E \ndot{\leq} \frac{1}{n} D(P_{MZ^n}P_{Y^n|Z^n} \| Q_{MY^nZ^n}).  \label{eq:converse_proof_conditional_testing_against_dependence:multi-letter_characterization}
\end{equation}
The goal is to derive a single-letter upper bound for \eqref{eq:converse_proof_conditional_testing_against_dependence:multi-letter_characterization}.

To this end, consider a distribution \(\tilde{P}_{MX^nY^nZ^n}\) that satisfies the following conditions
\begin{align}
    \tilde{P}_{MY^nZ^n} & = P_{MZ^n}P_{Y^n|Z^n}, \label{eq:converse_proof_conditional_testing_against_dependence:two_conditions_MXY_1} \\
    \tilde{P}_{MX^nZ^n} & = P_{MX^nZ^n}. \label{eq:converse_proof_conditional_testing_against_dependence:two_conditions_MXY_2}
\end{align}
Notice that
\begin{align}
    & D(\tilde{P}_{MX^nY^nZ^n} \| Q_{MX^nY^nZ^n}) \nonumber \\
    & = D(P_{MZ^n}P_{Y^n|Z^n} \tilde{P}_{X^n | M Y^n Z^n} \| Q_{MY^nZ^n} Q_{X^n|MY^nZ^n}) \label{eq:converse_proof_conditional_testing_against_dependence:use_conditoin_MXY_1} \\
    & = D(P_{MZ^n}P_{Y^n|Z^n} \| Q_{MY^nZ^n}) + D( \tilde{P}_{X^n | M Y^n Z^n} \| Q_{X^n|MY^nZ^n} | P_{MZ^n}P_{Y^n|Z^n}) \\
    & \geq D(P_{MZ^n}P_{Y^n|Z^n} \| Q_{MY^nZ^n}),
\end{align}
where \eqref{eq:converse_proof_conditional_testing_against_dependence:use_conditoin_MXY_1} is due to the condition in \eqref{eq:converse_proof_conditional_testing_against_dependence:two_conditions_MXY_1}.
Hence, we get
\begin{align}
    & D(P_{MZ^n}P_{Y^n|Z^n} \| Q_{MY^nZ^n}) \nonumber \\
    & \leq D(\tilde{P}_{MX^nY^nZ^n} \| Q_{MX^nY^nZ^n}) \\
    & = D(P_{MX^nZ^n} \tilde{P}_{Y^n | MX^nZ^n} \| Q_{MX^nZ^n} Q_{Y^n|X^nZ^n} ) \label{eq:converse_proof_conditional_testing_against_dependence:use_conditoin_MXY_2} \\
    & = D(P_{MX^nZ^n} \| Q_{MX^nZ^n}) + D(\tilde{P}_{Y^n|MX^nZ^n} \| Q_{Y^n | X^nZ^n} | P_{MX^nZ^n})  \label{eq:converse_proof_conditional_testing_against_dependence:splitting_M_X} \\
    & = D(P_{X^nZ^n} \| Q_{X^nZ^n}) + D(\tilde{P}_{Y^n|MX^nZ^n} \| Q_{Y^n | X^nZ^n} | P_{MX^nZ^n}), \label{eq:converse_proof_conditional_testing_against_dependence:split_P_X_Q_X} 
\end{align}
where \eqref{eq:converse_proof_conditional_testing_against_dependence:use_conditoin_MXY_2} is due to the condition in \eqref{eq:converse_proof_conditional_testing_against_dependence:two_conditions_MXY_2}; \eqref{eq:converse_proof_conditional_testing_against_dependence:split_P_X_Q_X} is because \(P_{M|X^nZ^n} = Q_{M|X^nZ^n} = \idc\{M = \tilde{f}_n(X^n, Z^n)\}\).
Notice that \eqref{eq:converse_proof_conditional_testing_against_dependence:split_P_X_Q_X}  holds for any \(\tilde{P}_{Y^n|MX^nZ^n}\), as long as the joint distribution \(P_{MX^nZ^n}\tilde{P}_{Y^n|MX^nZ^n}\) satisfies \eqref{eq:converse_proof_conditional_testing_against_dependence:two_conditions_MXY_1}.
Thus, the task reduces to choosing a suitable \(\tilde{P}_{Y^n|MX^nZ^n}\) that can yield the desired single-letter upper bound.

We choose a suitable \(\tilde{P}_{Y^n|MX^nZ^n}\) following the same fashion in Section \ref{sec:converse_proof_thm_testing_against_dependence}.
Given the distribution \(P_{MX^nZ^n}\), we can obtain its marginal distribution \(P_{MX^{i}Z^n}\) for every \(i\in [n]\).
Let \(U_i \triangleq (M, X^{i-1}, Z^{i-1}, Z_{i+1}^{n})\) and hence \(P_{MX^{i}Z^n} = P_{U_i X_iZ_i}\).
For every \(i \in [n]\), we then select a \(\tilde{P}_{Y_i | U_i X_i Z_i}\) such that the joint distribution \(\tilde{P}_{U_i X_i Y_i Z_i} = P_{U_i X_i Z_i}\tilde{P}_{Y_i | U_i X_i Z_i}\) satisfies \(\tilde{P}_{U_i Y_i Z_i} = P_{U_iZ_i} P_{Y_i|Z_i}\), i.e., \(\tilde{P}_{Y_i|Z_i} = P_{Y_i|Z_i}\) and also \(U_i\) and \(Y_i\) are conditionally independent given \(Z_i\) under \(\tilde{P}_{U_i Y_i Z_i}\).
Such \(\tilde{P}_{Y_i | U_i X_i Z_i}\) must exist since we can at least choose \(\tilde{P}_{Y_i | U_i X_i Z_i} = P_{Y_i |Z_i}\).
Consider the \(\tilde{P}_{Y^n|MX^nZ^n}\) defined by
\begin{equation}
    \tilde{P}_{Y^n|MX^nZ^n} \triangleq \prod_{i=1}^{n} \tilde{P}_{Y_i | U_i X_i Z_i} = \prod_{i=1}^{n} \tilde{P}_{Y_i | M Z^nX^{i-1}X_i},  \label{eq:converse_proof_conditional_testing_against_dependence:define_suitable_product_distribution}
\end{equation}
where \(\tilde{P}_{Y_i | U_i X_i Z_i}\) is the one we just selected.
In order to substitute this \(\tilde{P}_{Y^n|MX^nZ^n}\) into \eqref{eq:converse_proof_conditional_testing_against_dependence:split_P_X_Q_X}, we need to verify that \(\tilde{P}_{MX^nY^nZ^n} = P_{MX^nZ^n}\tilde{P}_{Y^n|MX^nZ^n}\) satisfies \eqref{eq:converse_proof_conditional_testing_against_dependence:two_conditions_MXY_1}.
First, observe that
\begin{align}
    & \tilde{P}_{MY^nZ^n}  \nonumber \\
    & = \sum_{x^n} \tilde{P}_{MX^nY^nZ^n}\\
    & = P_{MZ^n}\sum_{x^n} P_{X^n | M Z^n} \tilde{P}_{Y^n | M X^n Z^n}  \\
    & = P_{MZ^n}\sum_{x^n} \prod_{i=1}^{n} P_{X_i | M Z^n X^{i-1}  }  \prod_{i=1}^{n}\tilde{P}_{Y_i | M Z^n X^{i-1} X_i }  \\
    & = P_{MZ^n}\sum_{x^n} \prod_{i=1}^{n} P_{X_i | M Z^n X^{i-1} }  \tilde{P}_{Y_i | M Z^n X^{i-1} X_i} \\
    & = P_{MZ^n} \sum_{x^{n-1}} \prod_{i=1}^{n-1} P_{X_i | M Z^n X^{i-1} }  \tilde{P}_{Y_i | M Z^n X^{i-1} X_i} \times \big( \sum_{x_n} P_{X_n | M Z^n X^{n-1} }  \tilde{P}_{Y_n | M Z^n X^{n-1} X_n}   \big). \label{eq:converse_proof_conditional_testing_against_dependence:sum_x_n}
\end{align}
For every fixed \((M, Z^n, X^{n-1}) = (m, z^n, x^{n-1})\), we have
\begin{equation}
    \sum_{x_n} P_{X_n | M Z^n X^{n-1} }  \tilde{P}_{Y_n | M Z^n X^{n-1} X_n}  = \tilde{P}_{Y_n | M Z^n X^{n-1}}.
\end{equation}
Recall that \(U_n = (M, X^{n-1}, Z^{i-1}, Z_{i+1}^{n})\), and \(\tilde{P}_{Y_n | U_n X_n Z_n }\) is selected such that \(\tilde{P}_{U_nY_nZ_n} = P_{U_nZ_n}P_{Y_n |Z_n} \).
We then see that
\begin{align}
    \sum_{x_n} P_{X_n | M Z^n X^{n-1} }  \tilde{P}_{Y_n | M Z^n X^{n-1} X_n}  & = \tilde{P}_{Y_n | M Z^n X^{n-1}} \\
    & = \tilde{P}_{Y_n | U_n Z_n} \\
    & = P_{Y_n|Z_n}. \label{eq:converse_proof_conditional_testing_against_dependence:independence_Y_U}
\end{align}
Substituting \eqref{eq:converse_proof_conditional_testing_against_dependence:independence_Y_U} into \eqref{eq:converse_proof_conditional_testing_against_dependence:sum_x_n} yields
\begin{align}
     & \tilde{P}_{MZ^nY^n} \nonumber \\
     & = P_{MZ^n} \times P_{Y_n|Z_n} \times \sum_{x^{n-1}} \prod_{i=1}^{n-1} P_{X_i | M Z^n X^{i-1} } \tilde{P}_{Y_i | M Z^n X^{i-1} X_i}\\
     & = P_{MZ^n} \times P_{Y_n|Z_n} \times \sum_{x^{n-2}} \prod_{i=1}^{n-2} P_{X_i | M Z^n X^{i-1} }  \tilde{P}_{Y_i | M Z^n X^{i-1} X_i} \times \big( \sum_{x_{n-1}} P_{X_{n-1} | M Z^n X^{n-2} }  \tilde{P}_{Y_{n-1} | M Z^n X^{n-2} X_{n-1}}   \big).
\end{align}
By repeating the backward summation, we get
\begin{equation}
    \tilde{P}_{MZ^nY^n} = P_{MZ^n} \times \prod_{i=1}^{n}P_{Y_i |Z_i} = P_{MZ^n}P_{Y^n|Z^n}.
\end{equation}
Therefore, the \(\tilde{P}_{Y^n | M X^n Z^n}\) defined in \eqref{eq:converse_proof_conditional_testing_against_dependence:define_suitable_product_distribution} can be applied to the upper bound in \eqref{eq:converse_proof_conditional_testing_against_dependence:split_P_X_Q_X}.

By choosing this \(\tilde{P}_{Y^n | M X^n Z^n}\) in \eqref{eq:converse_proof_conditional_testing_against_dependence:split_P_X_Q_X}, we obtain that
\begin{align}
    & D(P_{MZ^n}P_{Y^n|Z^n} \| Q_{MZ^nY^n}) \nonumber \\
    & \leq D(P_{X^nZ^n} \| Q_{X^nZ^n}) + D(\tilde{P}_{Y^n|MX^nZ^n} \| Q_{Y^n | X^nZ^n} | P_{MX^nZ^n}) \\
    & = D(P_{X^nZ^n} \| Q_{X^nZ^n}) + \sum_{m, x^n, y^n, z^n} \tilde{P}_{MX^nY^nZ^n} \log \frac{ \tilde{P}_{Y^n|MX^nZ^n}  }{ Q_{Y^n|X^nZ^n} } \\
    & = D(P_{X^nZ^n} \| Q_{X^nZ^n}) + \sum_{m, x^n, y^n, z^n} \tilde{P}_{MX^nY^nZ^n} \log \frac{ \prod_{i=1}^{n} \tilde{P}_{Y_i | U_i X_i Z_i}  }{ \prod_{i=1}^{n} Q_{Y_i|X_iZ_i} } \\
    & = D(P_{X^nZ^n} \| Q_{X^nZ^n})  + \sum_{i=1}^{n} \sum_{u_i, x_i, y_i, z_i} \tilde{P}_{U_i X_i Y_i Z_i}  \log \frac{ \tilde{P}_{Y_i | U_i X_i Z_i }  }{ Q_{Y_i|X_i Z_i} } \\
    & = D(P_{X^nZ^n} \| Q_{X^nZ^n})  + \sum_{i=1}^{n} D(\tilde{P}_{Y_i | U_i X_i Z_i} \| Q_{Y_i|X_iZ_i} | P_{U_i X_i Z_i}) \\
    & = D(P_{X^nZ^n} \| Q_{X^nZ^n}) + n D(\tilde{P}_{Y_J | U_J X_J  Z_J J} \| Q_{Y_J|X_J Z_J J} | P_{U_J X_J  Z_J J}) \label{eq:converse_proof_conditional_testing_against_dependence:define_J}.\\
    & = D(P_{X^nZ^n} \| Q_{X^nZ^n}) + nD(\tilde{P}_{JU_JX_JY_JZ_J} \| P_{J U_J X_J Z_J} Q_{Y_J|X_J Z_J J}) \\
    & = D(P_{X^nZ^n} \| Q_{X^nZ^n}) + nD(\tilde{P}_{JU_JX_JY_JZ_J} \| P_{J U_J | X_J Z_J} P_{X_J Z_J} Q_{Y_J|X_J Z_J J}), \label{eq:converse_proof_conditional_testing_against_dependence:single_letterize_J}
\end{align}
where in \eqref{eq:converse_proof_conditional_testing_against_dependence:define_J}, \(J\)  is uniformly distributed over \([n]\), known as a time-sharing random variable.
Let \(\bar{U} \triangleq (J, U_J)\), and notice that \(P_{X_JZ_J} = P_{XZ}\), \(P_{Y_J} = P_{Y}\), and \(Q_{Y_J | X_J Z_J J} = Q_{Y|XZ}\) due to the i.i.d. assumption.
Thus, we can write  the upper bound in \eqref{eq:converse_proof_conditional_testing_against_dependence:single_letterize_J} as
\begin{align}
    & \frac{1}{n}D(P_{MZ^n}P_{Y^n|Z^n} \| Q_{MY^nZ^n}) \nonumber \\
    & \leq \frac{1}{n}D(P_{X^nZ^n} \| Q_{X^nZ^n}) + D(\tilde{P}_{\bar{U}XYZ} \| P_{\bar{U} | XZ} P_{XZ} Q_{Y|XZ}) \\
    & = D(P_{XZ} \| Q_{XZ}) + D(\tilde{P}_{\bar{U}XYZ} \| P_{\bar{U}| XZ} P_{XZ} Q_{Y|XZ}) \\
    & = D(\tilde{P}_{\bar{U}XYZ} \| P_{\bar{U} | XZ} Q_{XYZ}). \label{eq:converse_proof_conditional_testing_against_dependence:bound_in_U}
\end{align}
Since \(\tilde{P}_{U_i X_i Y_iZ_i} = P_{U_iX_iZ_i} \tilde{P}_{Y_i | U_i X_iZ_i}\) satisfies \(\tilde{P}_{U_iY_iZ_i} = P_{U_iZ_i}P_{Y_i|Z_i}\) for every \(i\), we see that \(\tilde{P}_{\bar{U}XYZ}  \) satisfies
\begin{align}
    \tilde{P}_{\bar{U}XZ} & = P_{\bar{U}XZ}, \label{eq:converse_proof_conditional_testing_against_dependence:marginal_condition_1} \\
    \tilde{P}_{\bar{U}YZ} & = P_{\bar{U}Z}P_{Y|Z}.\label{eq:converse_proof_conditional_testing_against_dependence:marginal_condition_2}
\end{align}
We have freedom in selecting \(\tilde{P}_{Y_i | U_i X_i Z_i}\) and hence \(\tilde{P}_{U_i X_i Y_iZ_i}\), so we can minimize over it for every \(i\) to obtain the tightest upper bound, yielding
\begin{align}
    \frac{1}{n}D(P_{MZ^n}P_{Y^n|Z^n} \| Q_{MY^nZ^n}) & \leq \min_{\tilde{P}_{\bar{U}XYZ}} D(\tilde{P}_{\bar{U}XYZ} \| P_{\bar{U}|XZ}Q_{XYZ}) \label{eq:converse_proof_conditional_testing_against_dependence:inner_minimization} \\
    & = \tilde{E}(P_{\bar{U}XZ}P_{Y|Z} \| Q_{\bar{U} XYZ}), \label{eq:converse_proof_conditional_testing_against_dependence:E_notation}
\end{align}
where in \eqref{eq:converse_proof_conditional_testing_against_dependence:inner_minimization}, \(\tilde{P}_{\bar{U}XYZ}\) needs to satisfy the conditions in \eqref{eq:converse_proof_conditional_testing_against_dependence:marginal_condition_1} and \eqref{eq:converse_proof_conditional_testing_against_dependence:marginal_condition_2}.

From the familiar weak converse technique, we have
\begin{align}
    R & \geq \frac{1}{n} H(M) \\
    & \geq \frac{1}{n} H(M |Z^n) \\
    & \geq \frac{1}{n}I(M; X^n |Z^n) \\
    & = \frac{1}{n}\sum_{i=1}^{n}I(M;X_i|X^{i-1},Z^n) \\
    & = \frac{1}{n} \sum_{i=1}^{n}I(M, X^{i-1}, Z^{i-1}, Z_{i+1}^{n};X_i | Z_i) \label{eq:converse_proof_conditional_testing_against_dependence:weak_converse}\\
    & = \frac{1}{n} \sum_{i=1}^{n}I(U_i;X_i |Z_i)  \\
    & = I( U_J ; X_J | Z_J, J) \\
    & = I(U_J, J; X_J | Z_J) \\
    & = I(\bar{U}; X |Z),
\end{align}
where \eqref{eq:converse_proof_conditional_testing_against_dependence:weak_converse} is because \(H(X_i|X^{i-1}, Z^n) = H(X_i|Z_i)\).
Therefore, we can upper bound \eqref{eq:converse_proof_conditional_testing_against_dependence:E_notation} by maximizing over all \(U\) satisfying \(I(U;X |Z) \leq R\) and conclude that
\begin{align}
    \frac{1}{n}D(P_{MZ^n}P_{Y^n|Z^n} \| Q_{MY^nZ^n}) & \leq \max_{ \substack{P_{U|XZ}: \\ I(U;X |Z ) \leq R} } \tilde{E}(P_{UXZ}P_{Y|Z} \| Q_{UXYZ}).
\end{align}
We prove the cardinality bound \(|\mathcal{U}| \leq |\mathcal{X}||\mathcal{Z}| + 1 \) in Appendix \ref{apd:cardinality_bound_conditional_testing_against_dependence}.
Our proof is simpler than that of \cite[Theorem 3]{hanHypothesisTestingMultiterminal1987} owing to new observations, see Remark \ref{rem:cardinality_bound}.
With this, the converse proof is completed.


\appendices

\section{Converse Proof for Testing Against Independence}
\label{apd:converse_proof_generalization_testing_against_independence}

In this appendix, we show that \(E(R) = E_{\mathrm{H}}(R)\) under the more general formulation of testing against independence, where the receiver decides between
    \begin{align}
        \mathcal{H}_0: (X, Y) & \sim P_{XY}, \\
        \mathcal{H}_1: (X, Y) & \sim Q_{X}Q_{Y}.
    \end{align}
This result may be known to experts in the field. However, as we are not aware of a proof in the literature, we provide one here for completeness.
Similar to the proof of Theorem \ref{thm:testing_against_dependence}, we only need to establish the converse part \(E(R) \leq E_{\mathrm{H}}(R)\).
We achieve this by extending the converse proof of the special case when the receiver decides between \(P_{XY}\) and \(P_{X}P_{Y}\).

We begin with the multi-letter characterization of the achievable exponent.
Consider a sequence of testing schemes \((f_n, \varphi_n)\) with \(||f_n|| \leq e^{nR}\), and denote by \(E\) the achievable error exponent under the sequence.
By applying the schemes to the two hypotheses respectively, we obtain two joint distributions \(P_{MX^nY^n}\) and \(Q_{MX^n}Q_{Y^n}\), where \(Q_{Y^n} = Q_{Y}^n\).
Following the multi-letter characterization proof \cite[Theorem 1]{ahlswedeHypothesisTestingCommunication1986}, we see that
\begin{align}
    E &\ndot{\leq} \frac{1}{n} D(P_{MY^n} \| Q_{M}Q_{Y^n}) \\
    & = \frac{1}{n}D(P_{M} \| Q_{M}) + \frac{1}{n} D(P_{Y^n} \| Q_{Y^n}) + \frac{1}{n} D(P_{MY^n} \| P_{M}P_{Y^n}) \\
    & \leq \frac{1}{n}D(P_{X^n} \| Q_{X^n}) + \frac{1}{n} D(P_{Y^n} \| Q_{Y^n}) + \frac{1}{n} D(P_{MY^n} \| P_{M}P_{Y^n}) \label{eq:apd:converse_proof_generalization_testing_against_independence:data_processing_inequality_M_to_X^n} \\
    & = D(P_{X} \| Q_{X}) + D(P_{Y}\|Q_{Y}) + \frac{1}{n} I(M;Y^n). \label{eq:apd:converse_proof_generalization_testing_against_independence:M_Y^n}
\end{align}
where \eqref{eq:apd:converse_proof_generalization_testing_against_independence:data_processing_inequality_M_to_X^n} is due to the data processing inequality.
Define \(U_i \triangleq (M, X^{i-1})\), and then \(\bar{U} \triangleq (U_J, J)\) where \(J\)  is uniformly distributed over \([n]\).
Following the weak converse technique \cite{wynerSourceCodingSide1975,ahlswedeSourceCodingSide1975,gamalNetworkInformationTheory2011}, we can show that
\begin{equation}
    \frac{1}{n}I(M; Y^n) \leq I(U_J,J; Y_J) = I(P_{Y},P_{\bar{U}|Y}) = D(P_{\bar{U}Y} \| P_{\bar{U}}P_{Y}), \label{eq:apd:converse_proof_generalization_testing_against_independence:weak_converse_single_letterization_Y_U}
\end{equation}
and
\begin{equation}
    R \geq I(U_J,J;X_J) = I(\bar{U};X), \label{eq:apd:converse_proof_generalization_testing_against_independence:weak_converse_single_letterization_X_U}
\end{equation}
where we have \(P_{\bar{U}XY} = P_{\bar{U}|X}P_{XY}\).

We next upper bound \eqref{eq:apd:converse_proof_generalization_testing_against_independence:weak_converse_single_letterization_Y_U} as follows.
Consider a distribution \(\tilde{P}_{\bar{U}XY}\) that satisfies
\begin{align}
    \tilde{P}_{\bar{U}X} & = P_{\bar{U}X}, \label{eq:apd:converse_proof_generalization_testing_against_independence:marginal_condition_1} \\
    \tilde{P}_{\bar{U}Y} & = P_{\bar{U}Y}. \label{eq:apd:converse_proof_generalization_testing_against_independence:marginal_condition_2}
\end{align}
From the data processing processing inequality, we must have
\begin{equation}
    D(P_{\bar{U}Y} \| P_{\bar{U}}P_{Y}) \leq D(\tilde{P}_{\bar{U}XY} \| P_{\bar{U}X}P_{Y} ), \label{eq:apd:converse_proof_generalization_testing_against_independence:data_processing_adding_U}
\end{equation}
where the processing corresponds to marginalization.
Substituting \eqref{eq:apd:converse_proof_generalization_testing_against_independence:weak_converse_single_letterization_Y_U} and \eqref{eq:apd:converse_proof_generalization_testing_against_independence:data_processing_adding_U} into \eqref{eq:apd:converse_proof_generalization_testing_against_independence:M_Y^n} yields that
\begin{align}
    E &\ndot{\leq} D(P_{X} \| Q_{X}) + D(P_{Y}\|Q_{Y}) +  D(\tilde{P}_{\bar{U}XY} \| P_{\bar{U}X}P_{Y} ) \\
    & = D(\tilde{P}_{\bar{U}XY} \| P_{\bar{U}|X}Q_XQ_{Y} ) \label{eq:apd:converse_proof_generalization_testing_against_independence:data_processing_P_tilde}\\
    & = \min_{\tilde{P}_{\bar{U}XY}} D(\tilde{P}_{\bar{U}XY} \| P_{\bar{U}|X}Q_XQ_{Y} )  \label{eq:apd:converse_proof_generalization_testing_against_independence:data_processing_minimizing_over_P_tilde} \\
    & = E(P_{\bar{U}XY}\| Q_{\bar{U}X}Q_{Y}) \\
    & \leq \max_{ \substack{P_{U|X}: \\ I(U;X) \leq R} }  E(P_{UXY}\| Q_{UX}Q_{Y}),  \label{eq:apd:converse_proof_generalization_testing_against_independence:maximize_over_U}
\end{align}
where \eqref{eq:apd:converse_proof_generalization_testing_against_independence:data_processing_minimizing_over_P_tilde} is because \eqref{eq:apd:converse_proof_generalization_testing_against_independence:data_processing_P_tilde} holds for any \(\tilde{P}_{\bar{U}XY}\) satisfying \eqref{eq:apd:converse_proof_generalization_testing_against_independence:marginal_condition_1} and \eqref{eq:apd:converse_proof_generalization_testing_against_independence:marginal_condition_2}, and hence we can minimize over such \(\tilde{P}_{\bar{U}XY}\) to obtain the tightest upper bound; \eqref{eq:apd:converse_proof_generalization_testing_against_independence:maximize_over_U} follows from maximizing over all \(U\) satisfying \(I(U;X) \leq R\).
The cardinality bound \(|\mathcal{U}| \leq |\mathcal{X}| + 1\) is due to Han \cite[Theorem 3]{hanHypothesisTestingMultiterminal1987}.
With this, the converse proof is completed.

\begin{remark}
    We can readily derive an alternative single-letter upper bound from \eqref{eq:apd:converse_proof_generalization_testing_against_independence:M_Y^n}.
    Substituting \eqref{eq:apd:converse_proof_generalization_testing_against_independence:weak_converse_single_letterization_Y_U} into \eqref{eq:apd:converse_proof_generalization_testing_against_independence:M_Y^n}, and considering \eqref{eq:apd:converse_proof_generalization_testing_against_independence:weak_converse_single_letterization_X_U}, we can deduce that
    \begin{align}
        E & \ndot{\leq} D(P_{X} \| Q_{X}) + D(P_{Y}\|Q_{Y}) +  \max_{ \substack{P_{U|X}: \\ I(U;X) \leq R} } I(U;Y) \\
        & = \max_{ \substack{P_{U|X}: \\ I(U;X) \leq R} } D(P_{X} \| Q_{X}) + D(P_{Y}\|Q_{Y}) + I(U;Y), \label{eq:apd:converse_proof_generalization_testing_against_independence:simplified_expression}
    \end{align}
    where \eqref{eq:apd:converse_proof_generalization_testing_against_independence:simplified_expression} is because \(D(P_{X} \| Q_{X}) + D(P_{Y}\|Q_{Y})\) is constant.
    Using arguments similar to  those in \eqref{eq:apd:converse_proof_generalization_testing_against_independence:data_processing_adding_U} and \eqref{eq:apd:converse_proof_generalization_testing_against_independence:data_processing_minimizing_over_P_tilde}, we see that the RHS of \eqref{eq:apd:converse_proof_generalization_testing_against_independence:simplified_expression} is no greater than Han's exponent, and hence the latter reduces to the former in this case.
\end{remark}

\section{Proof of the Cardinality Bound in Theorem \ref{thm:conditional_testing_against_dependence}}
\label{apd:cardinality_bound_conditional_testing_against_dependence}

Let \(P_{UXYZ} = P_{UXZ}P_{Y|Z}\), and recall that \(Q_{UXYZ} = P_{U|XZ}Q_{XYZ}\).  In this appendix, we show that
\begin{equation}
    \tilde{E}(R) = \max_{\substack{P_{U|XZ}: \\ I(U;X|Z) \leq R }} \tilde{E}(P_{UXYZ} \| Q_{UXYZ})
\end{equation}
remains unchanged if we impose the cardinality constraint \(|\mathcal{U}| \leq |\mathcal{X}||\mathcal{Z}| + 1\).
We accomplish this by showing that for any  \(P_{U|XZ}\) with \(|\mathcal{U}| > |\mathcal{X}||\mathcal{Z}| + 1\), we can find a \(P_{U^{\prime} | XZ}\) with \(|\mathcal{U}^{\prime}| \leq |\mathcal{X}||\mathcal{Z}| + 1\) such that \(I(U^{\prime};X|Z) \leq R\) and 
\begin{equation}
    \tilde{E}(P_{UXYZ} \| Q_{UXYZ}) = \tilde{E}(P_{U^{\prime}XYZ} \| Q_{U^{\prime}XYZ}). \label{eq:apd:cardinality_bound_conditional_testing_against_dependence:goal}
\end{equation}

To this end, recall the definition of \(\tilde{E}(P_{UXYZ} \| Q_{UXYZ})\) in \eqref{eq:main_results:conditional_testing_against_dependence:definition_E_tilde} and we see that
\begin{align}
    \tilde{E}(P_{UXYZ} \| Q_{UXYZ}) & = \min_{\substack{\tilde{P}_{UXYZ} \in \tilde{\mathcal{P}}_{\mathrm{H}}(P_{UXYZ})  }} D( \tilde{P}_{UXYZ} \| Q_{UXYZ} ) \label{eq:apd:cardinality_bound_conditional_testing_against_dependence:starting} \\
    & = \min_{\substack{\tilde{P}_{UXYZ} \in \tilde{\mathcal{P}}_{\mathrm{H}}(P_{UXYZ})  }} D( \tilde{P}_{UXYZ} \| P_{U|XZ}Q_{XYZ}) \\
    & =  \min_{\substack{\tilde{P}_{UXYZ} \in \tilde{\mathcal{P}}_{\mathrm{H}}(P_{UXYZ})  }} D( \tilde{P}_{UXYZ} \| P_{U}Q_{XYZ}) - I(U;X,Z) \label{eq:apd:cardinality_bound_conditional_testing_against_dependence:property_P}\\
    & = -H(X,Z) + H(X,Z|U) + \min_{\substack{\tilde{P}_{UXYZ} \in \tilde{\mathcal{P}}_{\mathrm{H}}(P_{UXYZ})  }} D( \tilde{P}_{XYZ|U} \| Q_{XYZ} | P_{U}), 
\end{align}
where \eqref{eq:apd:cardinality_bound_conditional_testing_against_dependence:property_P} is because \(\tilde{P}_{UXYZ}\) satisfies \(\tilde{P}_{UXZ} = P_{UXZ}\).
Recall that
\begin{equation}
     \tilde{\mathcal{P}}_{\mathrm{H}}(P_{UXYZ})  = \{ \tilde{P}_{UXYZ}: \tilde{P}_{UXZ} = P_{UXZ}, \tilde{P}_{UYZ} = P_{UYZ}\}.
\end{equation}
Let us now consider
\begin{equation}
     \tilde{\mathcal{P}}_{\mathrm{H}}(P_{XYZ|U = u})  \triangleq \{ \tilde{P}_{XYZ|U = u}: \tilde{P}_{XZ|U = u} = P_{XZ| U=u}, \tilde{P}_{YZ|U=u} = P_{YZ|U=u}\}.
\end{equation}
It can then be verified that
\begin{align}
    & \min_{\substack{\tilde{P}_{UXYZ} \in \tilde{\mathcal{P}}_{\mathrm{H}}(P_{UXYZ})  }} D( \tilde{P}_{XYZ|U} \| Q_{XYZ} | P_{U}) \nonumber \\
    & = \min_{\substack{\tilde{P}_{UXYZ} \in \tilde{\mathcal{P}}_{\mathrm{H}}(P_{UXYZ})  }} \sum_{u} P_{U}(u) \times D( \tilde{P}_{XYZ|U=u} \| Q_{XYZ}) \\
    & = \sum_{u}P_{U}(u) \times \min_{\substack{\tilde{P}_{XYZ|U=u} \in \tilde{\mathcal{P}}_{\mathrm{H}}(P_{XYZ|U=u})  }} D( \tilde{P}_{XYZ|U=u} \| Q_{XYZ}).  \label{eq:apd:cardinality_bound_conditional_testing_against_dependence:move_P_U_out}
\end{align}
Notice that \(P_{YZ|U = u} = \sum_{x}P_{XYZ|U=u} = \sum_{x}P_{XZ|U=u}P_{Y|XZ}\).
Hence, given a fixed \(P_{XYZ}\), the set \(\tilde{\mathcal{P}}_{\mathrm{H}}(P_{XYZ|U=u}) \) effectively only depends on \(P_{XZ| U=u}\). 
Based on this observation, we define a function of \(P_{XZ|U=u}\) by
\begin{equation}
    g_1(P_{XZ|U=u}) \triangleq H(X,Z|U=u) + \min_{\substack{\tilde{P}_{XYZ|U=u} \in \tilde{\mathcal{P}}_{\mathrm{H}}(P_{XYZ|U=u})  }} D( \tilde{P}_{XYZ|U=u} \| Q_{XYZ}).
\end{equation}
Proceeding similarly to the proof of \cite[Proposition 2]{watanabeSuboptimalityRandomBinning2022}, we can use the Lagrangian to solve the optimization problem defining \(g_1(P_{XZ|U=u})\), from which it follows that \(g_1(P_{XZ|U=u})\) is continuous in \(P_{XZ|U=u}\).
Thus, we can write
\begin{equation}
    \tilde{E}(P_{UXYZ} \| Q_{UXYZ}) = -H(X,Z) + \sum_{u}P_{U}(u) \times g_1(P_{XZ|U=u}). \label{eq:apd:cardinality_bound_conditional_testing_against_dependence:g_1_function}
\end{equation}

Consider the following functions
\begin{equation}
    g_{xz}(P_{XZ}) = P_{XZ}(x,z), \qquad \forall x, z
\end{equation}
and
\begin{equation}
    g_0(P_{XZ}) = H(X|Z).
\end{equation}
It follows that
\begin{align}
    \sum_{u}P_{U}(u)g_{xz}(P_{XZ|U=u}) &= P_{XZ}(x,z), \qquad \forall x, z \\
    \sum_{u}P_{U}(u)g_{0}(P_{XZ|U=u}) & = H(X|Z,U).
\end{align}
Therefore, by the support lemma \cite[Appendix C]{gamalNetworkInformationTheory2011}, there exists a \(P_{U^{\prime}XZ}\) with \(|\mathcal{U}^{\prime}| \leq |\mathcal{X}||\mathcal{Z}| + 1\) such that
\begin{align}
    \sum_{u^{\prime}}P_{U^{\prime}}(u^{\prime})g_{xz}(P_{XZ|U^{\prime}=u^{\prime}}) &= P_{XZ}(x,z), \qquad \forall x, z \label{eq:apd:cardinality_bound_conditional_testing_against_dependence:support_lemma_1}  \\
    \sum_{u^{\prime}}P_{U^{\prime}}(u^{\prime})g_{0}(P_{XZ|U^{\prime}=u^{\prime}}) & = H(X|Z,U), \label{eq:apd:cardinality_bound_conditional_testing_against_dependence:support_lemma_2} \\
    \sum_{u^{\prime}}P_{U^{\prime}}(u^{\prime}) \times g_1(P_{XZ|U^{\prime}=u^{\prime}}) & = \sum_{u}P_{U}(u) \times g_1(P_{XZ|U=u}). \label{eq:apd:cardinality_bound_conditional_testing_against_dependence:support_lemma_3} 
\end{align}
From \eqref{eq:apd:cardinality_bound_conditional_testing_against_dependence:support_lemma_1}, we see that \(P_{U^{\prime}XZ}\) satisfies the marginal distribution constraint \(P_{U^{\prime}XZ} = P_{U^{\prime}|XZ}P_{XZ}\).
Together with \eqref{eq:apd:cardinality_bound_conditional_testing_against_dependence:support_lemma_2}, it follows that \(P_{U^{\prime} | XZ}\) satisfies \(I(U^{\prime};X|Z) = I(U;X| Z) \leq R\).
Recall our goal in \eqref{eq:apd:cardinality_bound_conditional_testing_against_dependence:goal}. By \eqref{eq:apd:cardinality_bound_conditional_testing_against_dependence:support_lemma_3}, we now have
\begin{align}
    \tilde{E}(P_{UXYZ} \| Q_{UXYZ}) & = -H(X,Z) + \sum_{u}P_{U}(u) \times g_1(P_{XZ|U=u}) \\
    & = -H(X,Z) + \sum_{u^{\prime}}P_{U^{\prime}}(u^{\prime}) \times g_1(P_{XZ|U^{\prime}=u^{\prime}}) \\
    & = \tilde{E}(P_{U^{\prime}XYZ} \| Q_{U^{\prime}XYZ}), \label{eq:apd:cardinality_bound_conditional_testing_against_dependence:final}
\end{align}
where \eqref{eq:apd:cardinality_bound_conditional_testing_against_dependence:final} follows from reversing the derivation from \eqref{eq:apd:cardinality_bound_conditional_testing_against_dependence:starting} to \eqref{eq:apd:cardinality_bound_conditional_testing_against_dependence:g_1_function}.
With this, the proof is completed.

\begin{remark}
    \label{rem:cardinality_bound}
    The key difference between our proof and Han's proof lies in \eqref{eq:apd:cardinality_bound_conditional_testing_against_dependence:g_1_function}, where we observe that \(\tilde{E}(P_{UXYZ} \| Q_{UXYZ})\) can be written as a linear function of \(P_{U}\).
    If we were to follow Han's proof, we would begin by using the Lagrangian to solve \eqref{eq:apd:cardinality_bound_conditional_testing_against_dependence:starting}, see \cite[Lemma 5]{hanHypothesisTestingMultiterminal1987}. 
    Since different \(P_{UXYZ}\) require different optimal Lagrange multipliers, we could not simply reconstruct \(\tilde{E}(P_{U^{\prime}XYZ} \| Q_{U^{\prime}XYZ})\)  as we did in \eqref{eq:apd:cardinality_bound_conditional_testing_against_dependence:final}, because the multipliers optimal for \(P_{UXYZ}\) may not be optimal for \(P_{U^{\prime}XYZ}\).
    Han uses a sophisticated argument to circumvent this issue.
    In our proof, we first observe \eqref{eq:apd:cardinality_bound_conditional_testing_against_dependence:move_P_U_out}, which yields \eqref{eq:apd:cardinality_bound_conditional_testing_against_dependence:g_1_function}.
    This allows us to directly use the support lemma to prove the cardinality bound, simplifying the proof.
\end{remark}

\section*{Acknowledgment}
We acknowledge ChatGPT 5.5 in helping to numerically disprove Han's conjecture on lautum information, which then motivated this work on Han's exponent for testing against dependence.

\bibliography{ref}
\bibliographystyle{IEEEtran}

\end{document}

%% file: fig/DHT_model.tex
\begin{tikzpicture}
        \node at (0,0) [name = source] { $X^n$ } ;
        \node at (2.5,0) [draw, rectangle,  name = tx, label= above:{\small Transmitter}, minimum height=0.8cm,minimum width=1.8cm] {$f_n$};
        \node at (7,0) [draw, rectangle,  name = rx, label= above:{\small Receiver},  minimum height=0.8cm,minimum width=1.8cm] {$\varphi_n$};
        \node at (10,0) [name = output] { $ \mathcal{H}_0 / \mathcal{H}_1$ };
        \node at (7,-1.5) [name = SI] { $Y^n$ };

        \draw [->] (source.east) -- (tx.west);
        \draw [->] (tx.east) -- node[align = center, above]{\small $M$} (rx.west);
        \draw [->] (rx.east) -- (output.west);
        \draw [->] (SI.north) -- (rx.south);
        
\end{tikzpicture}

%% file: fig/Testing_against_product_dependence_independence.tex
\begin{tikzpicture}
        \node at (0,0) [name = source] { $(X^n_1, X_2^n)$ } ;
        \node at (2.5,0) [draw, rectangle,  name = tx, label= above:{\small Transmitter}, minimum height=0.8cm,minimum width=1.8cm] {$f_n$};
        \node at (7,0) [draw, rectangle,  name = rx, label= above:{\small Receiver},  minimum height=0.8cm,minimum width=1.8cm] {$\varphi_n$};
        \node at (10,0) [name = output] { $ \mathcal{H}_0 / \mathcal{H}_1$ };
        \node at (7,-1.5) [name = SI] { $(Y_1^n, Y_2^n)$ };

        \draw [->] (source.east) -- (tx.west);
        \draw [->] (tx.east) -- node[align = center, above]{\small $M$} (rx.west);
        \draw [->] (rx.east) -- (output.west);
        \draw [->] (SI.north) -- (rx.south);
        
\end{tikzpicture}

%% file: fig/Conditional_testing_against_dependence.tex
\begin{tikzpicture}
        \node at (0,0) [name = source] { $X^n$ } ;
        \node at (2.5,0) [draw, rectangle,  name = tx, label= above:{\small Transmitter}, minimum height=0.8cm,minimum width=1.8cm] {$\tilde{f}_n$};
        \node at (7,0) [draw, rectangle,  name = rx, label= above:{\small Receiver},  minimum height=0.8cm,minimum width=1.8cm] {$\varphi_n$};
        \node at (10,0) [name = output] { $ \mathcal{H}_0 / \mathcal{H}_1$ };
        \node at (7,-1.5) [name = SI] { $(Y^n, Z^n)$ }; 
        \node at (2.5,-1.5) [name = TSI] { $Z^n$ };

        \draw [->] (source.east) -- (tx.west);
        \draw [->] (tx.east) -- node[align = center, above]{\small $M$} (rx.west);
        \draw [->] (rx.east) -- (output.west);
        \draw [->] (SI.north) -- (rx.south);
        \draw [->] (TSI.north) -- (tx.south);
        
\end{tikzpicture}

%% file: fig/Testing_against_conditional_dependence.tex
\begin{tikzpicture}
        \node at (0,0) [name = source] { $X^n$ } ;
        \node at (2.5,0) [draw, rectangle,  name = tx, label= above:{\small Transmitter}, minimum height=0.8cm,minimum width=1.8cm] {$f_n$};
        \node at (7,0) [draw, rectangle,  name = rx, label= above:{\small Receiver},  minimum height=0.8cm,minimum width=1.8cm] {$\varphi_n$};
        \node at (10,0) [name = output] { $ \mathcal{H}_0 / \mathcal{H}_1$ };
        \node at (7,-1.5) [name = SI] { $(Y^n, Z^n)$ };

        \draw [->] (source.east) -- (tx.west);
        \draw [->] (tx.east) -- node[align = center, above]{\small $M$} (rx.west);
        \draw [->] (rx.east) -- (output.west);
        \draw [->] (SI.north) -- (rx.south);

\end{tikzpicture}